\newcommand*\Eval[3]{\left.#1\right\rvert_{#2}^{#3}}
\DeclareMathOperator*{\argmax}{arg\,max}
\newtheorem{prop}{Proposition}[section]
\newtheorem{thm}{Theorem}[section]
\newtheorem{lemma}[thm]{Lemma}
\newcommand{\angstrom}{\mbox{\normalfont\AA}}
\title{Conformation Clustering of Long MD Protein Dynamics with an Adversarial
       Autoencoder}
\author{
  Yunlong Liu\\
  Department of Biophysics and Biophysical Chemistry\\
  The Johns Hopkins University\\
  Baltimore, MD 21205 \\
  \texttt{yliu120@jhmi.edu} \\
  \And
  L. Mario Amzel \\
  Department of Biophysics and Biophysical Chemistry\\
  The Johns Hopkins University \\
  Baltimore, MD 21205 \\
  \texttt{mamzel@jhmi.edu} \\
}
\begin{document}
\bibliographystyle{plain}

\maketitle

\begin{abstract}

  Recent developments in specialized computer hardware have greatly accelerated
  atomic level Molecular Dynamics (MD) simulations. A single GPU-attached
  cluster is capable of producing microsecond-length trajectories in reasonable
  amounts of time. Multiple protein states and a large number of microstates
  associated with folding and with the function of the protein can be observed
  as conformations sampled in the trajectories. Clustering those conformations,
  however, is needed for identifying protein states, evaluating transition rates
  and understanding protein behavior. In this paper, we propose a novel
  data-driven generative conformation clustering method based on the adversarial
  autoencoder (AAE) and provide the associated software implementation
  \textbf{Cong}. The method was tested using a 208 $\mu$s MD simulation of the
  fast-folding peptide Trp-Cage (20 residues) obtained from the D.E. Shaw
  Research Group. The proposed clustering algorithm identifies many of the
  salient features of the folding process by grouping a large number of
  conformations that share common features not easily identifiable in the
  trajectory.

\end{abstract}

\section{Introduction}

With the development over the past decade of high performance CPU clusters, GPU
accelerated computing and software-level optimization, the performance of
general molecular dynamics (MD) software (NAMD\cite{NAMD},
GROMACS\cite{Gromacs}, AMBER\cite{amber}, CHARMM\cite{charmm}, etc.) has been
extraordinarily enhanced. Increasingly, individual research groups are capable
of running microsecond-scale MD simulations for macromolecular systems of
considerable size. Some research laboratories equipped with specialized hardware
or large-scale distributed software \cite{ANTON}\cite{FoldingAtHome} can even
produce millisecond-scale trajectories. Considering an MD trajectory as a number
of conformations sampled from the complete ensemble (NVE, NVT or NPT) as it is
observed through its time evolution, long trajectories provide significant
explorations of the conformational space and can reveal key conformational
transitions of the macromolecules. Depending on the simulation, those
conformational states may reflect stages in the folding of the protein or
details about the mechanism of action of the molecules. Objective unbiased
identification of these states is crucial for extracting mechanistic information
from MD simulations.

Though in some long MD trajectories conformational changes may sometimes be
visually recognized, identifying key conformations, grouping similar
conformations and quantifying key conformational transitions are more
objectively dealt with using specialized software. This is because the number of
frames in long trajectories is very large and the dimensionality of the
simulation system is high. However, grouping conformations and identifying
transitions are often essential for
understanding the behavior of the molecules and for comparing changes of their
behavior under different environments or those that result from mutations. To
tackle these tasks, similarity-based clustering algorithms, such as KMeans,
KMedoids, hiearchical clustering, agglomerative clustering and other
domain-specific algorithms are used. Their implementations are available as
utilities in most major MD packages\cite{Gromacs}\cite{MDTraj}\cite{MDAnalysis}.
All these similarity-based algorithms share the following common basis:
clustering procedures depend on frame-wise similarities, usually quantified by
the RMSD. Except for KMeans and KMedoids, which are optimized by expectation
maximization, all the other algorithms additionally rely on a pre-determined
cutoff to make decisions about how to assign individual frames to a cluster.

Similarity-based clustering algorithms solve the clustering task successfully
but a few caveats remain. As MD trajectories are getting longer, the number of
frames, denoted by $N$, is growing by orders of magnitude. Since the computation
and space complexity of the RMSD matrix are both $O(N^2)$, algorithms based on
calculating the RMSD matrix take large amounts of computing time and can
overwhelm the memory. In practice, many researchers run clustering using every
$n$-th frame of the long trajectory; however, this strategy could introduce
significant bias into calculations such as those of transition probabilities
between clusters. In addition, since RMSDs are essentially Euclidean distances
defined on a high dimensional space, clustering using this metric could be
vulnerable to artifacts due to their sensitivity to large variations in the
conformations of the termini and of flexible loops\cite{Metrics}\cite{Survey}.
Some domain-specific RMSD-based clustering methods may suffer from additional
side-effects.  For instance, the GROMOS algorithm\cite{Gromos} always
concatenates two segments of the time series after extracting a neighborhood as
a new cluster, which results in a cutoff-sensitive segmentation of the original
trajectory. Given these caveats, it is clear that an ideal clustering method
should have the following properties: 1) the time and space complexity should be
proportional to $N$; 2) since defining an ideal distance metric for measuring
similarity between structures is very difficult, the algorithm should be data-driven
rather than relying on a pre-defined metric. In other words, ideally, the
algorithm would "learn" the metric by optimizing some loss function. 3) The
algorithm should always look at the entire dataset to avoid unnecessary
segmentations.

Recently, deep neural network (dNN) models have been proven to be very
successful in the field of Artificial Intelligence (AI). dNN models are powerful
in developing very complicated non-linear functions that can map a set of raw
image pixels in the data space to some separable regions in the feature space
and work conversely as generative models when the probabilistic distributions of
the latent representations are available. The training process is entirely
data-driven and converges after feeding the entire dataset for a limited
number of times ("epochs"). Although dNN models are primarily employed in AI,
they are intrinsically general and can be applied to problems in other fields.
In this paper, we exploit the data-driven property of dNN, to propose a novel
method for clustering MD trajectories based on an existing dNN model --
Adversarial Autoencoder (AAE)\cite{AAE}.

\section{Theory}

\subsection{Deep Autoencoder}

Deep autoencoders (DAE) are widely used for unsupervised learning tasks such as
learning deep representations or dimensionality reduction. Typically, a
traditional deep autoencoder consists of two components, the encoder and the
decoder. Each component is constructed with a feed-forward neural network,
which,
in essence, represents a non-linear function. Let's denote the encoder's
function as $f_{\theta}: \mathcal{X} \rightarrow \mathcal{H}$, and denote the
decoder's function as $g_{\omega}: \mathcal{H} \rightarrow \mathcal{X}$, where
$\theta, \omega$ are parameter sets for each function, $\mathcal{X}$ represents
the data space and $\mathcal{H}$ the feature (latent) space. The autoencoder
used in the model presented here is always aiming at mapping the high-dimensional
data space to a lower-dimension feature space and the quality of the
mapping is monitored by how well it reconstructs the original data space from
the feature space. For this purpose, it optimizes the reconstruction loss, which
is

\begin{equation}
L(\theta, \omega) = \frac{1}{N}\left\Vert X - g_{\omega}(f_{\theta}(X)) \right\Vert^2
\end{equation} where $L(\theta, \omega)$ represents the loss function for the
reconstruction.

The use of the $L^2$-norm in the reconstruction loss is valid for guiding the
reconstructions to remain close to the original data. This approach may be
contrasted with the $L^2$-norm used to measure similarities between frames. In
fact, for a finite dimensional vector space $V$, all $l_p$ norms ($p \geq 1$)
are equivalent and induce the same topology. Assuming the dimensionality
reduction with DAE is successful, the deep representations generated will
feature the major skeleton of MD frames. Therefore, discriminating frames by
their deep representations is inherently insensitive to the random noise.
Moreover, if we assume a given distribution for the deep representations, the
reconstruction function $g_{\omega}$ naturally serves as a generating function
that converts the model into a generative model.

\subsection{Generative Adversarial Network}

Generative Adversarial Network (GAN)\cite{GAN} is designed to be a deep generative model.
It assumes that all observed data are generated from some latent random variable
$\mathbf{z}$ with a known prior distribution
$p(\mathbf{z})$. The overall goal of GAN is to learn how to generate
$\mathbf{x}$ from $\mathbf{z}$, e.g. finding
$p(\mathbf{x}|\mathbf{z})$. A GAN model has two components: a generator network
(denoted by function $G(\mathbf{z}; \theta_g)$) and a discriminator network
(denoted by function $D(\mathbf{x}; \theta_d)$), which are non-linear functions
constructed with multilayer networks. Unlike the autoencoder and many
traditional models, GAN is a two-player game rather than an optimizer of a
monolithic loss function. The discriminator tries its best to differentiate the
real data samples from the generated fake samples, while the generator tries to
fool the discriminator by generating counterfeits. When the discriminator is
unable to distinguish the source of the input samples, the generator is considered
trained. In order to achieve this goal, GAN trains the discriminator and the
generator simultaneously in two phases:

\begin{equation}
\label{eq:gan}
\min_G\max_D L(D, G)=\mathbb{E}_{\mathbf{x}\sim
\mathbb{P}_{data}(\mathbf{x})}[f(D(\mathbf{x}))]   + \mathbb{E}_{\mathbf{z}\sim
\mathbb{P}_{\mathbf{z}}(\mathbf{z})}[f(1-D(G(\mathbf{z})))]
\end{equation} where $f: \mathbb{R} \rightarrow \mathbb{R}$ and $\mathbb{E}[\cdot]$
denotes the expectation of a random variable.

Standard GAN chooses $f$ as the logarithmic function and the above adversarial
loss function adopts the form of the \textit{Jensen-Shannon} (JS) divergence.
Training GAN with the JS divergence loss is very likely to fall into unstable
traps such as mode collapse. A recently proposed variant of GAN, Wasserstein GAN
(WGAN)\cite{WGAN}, replaces the JS divergence loss with Wasserstein loss, which is,

\begin{equation}
\label{eq:wgan}
  W(\mathbb{P}_{data}, \mathbb{P}_{G(\mathbf{z})}) = \sup_{||D||_L \leq K}
  \mathbb{E}_{\mathbf{x} \sim \mathbb{P}_{data}}[D(\mathbf{x})] - \mathbb{E}_{\mathbf{x}\sim
  \mathbb{P}_{G(\mathbf{z})}}[D(\mathbf{x})]
\end{equation}

Essentially, WGAN substitutes $f$ in Equation \ref{eq:gan} with an identity
function and this removes the sigmoid activation function in the last layer of
the discriminator network. Empirically, WGAN proved to be more stable but it has
to be trained with a fairly small step and weight clipping to keep the
discriminator function $D(x)$ approximating a $K$-Lipschitz function. A recent
study\cite{WGAN_GP} shows that weight clipping leads to optimization
difficulties and could result in pathological behaviors. Instead of tuning an
optimal clipping threshold, the study \cite{WGAN_GP} suggests enforcing the
Lipschitz constraint by adding a gradient penalty term to the original loss
function given by Equation \ref{eq:wgan}. In our experiments, we train both
WGAN components in our model with the gradient penalty (Term $\lambda
\mathbb{E}_{\hat{\mathbf{x}} \sim \mathbb{P}_{\hat{\mathbf{x}}}}
[(\left\Vert\nabla_{\hat{\mathbf{x}}} D(\hat{\mathbf{x}}) \right\Vert_2 - 1)^2]$
in Equation \ref{eq:wgan-gp}), rather than using weight clipping:

\begin{equation}
\label{eq:wgan-gp}
L(\mathbf{x}) = \mathbb{E}_{\mathbf{z} \sim \mathbb{P}(z)}[D(G(\mathbf{z}))]
- \mathbb{E}_{\mathbf{x} \sim \mathbb{P}_{data}}[D(\mathbf{x})] +
\lambda \mathbb{E}_{\hat{\mathbf{x}} \sim \mathbb{P}_{\hat{\mathbf{x}}}}
[(\left\Vert\nabla_{\hat{\mathbf{x}}} D(\hat{\mathbf{x}}) \right\Vert_2 - 1)^2]
\end{equation} where $\hat{\mathbf{x}} = \epsilon \mathbf{x} + (1-\epsilon) G(\mathbf{z})$,
and $\epsilon \sim U[0, 1]$.

\subsection{Adversarial Autoencoder}

Generally, AAEs have different architectures targeted at different
tasks. Since our goal is to use it as a generative clustering method, we adopt
the unsupervised version of AAE \cite{AAE}. This version of the architecture
(see Figure 1) is built based on the data generative process it assumes: each
data element $\mathbf{x}$ in the dataset is generated from a latent categorical
random variable $\mathbf{y}\sim Cat(\boldsymbol{\pi})$, represented as a
$K$-dimensional one-hot vector ($K$ is the dimension of $\mathbf{y}$ as well as
a user-defined cluster number) and a Gaussian style variable $\mathbf{z} \sim
N(0, I)$. The autoencoder component of the AAE estimates the encoding function
$p(\mathbf{y}, \mathbf{z}| \mathbf{x})$ and the decoding function
$p(\mathbf{x}|\mathbf{y}, \mathbf{z})$ by minimizing the reconstruction loss.
Here we denote the encoding function and decoding function as $p(\cdot)$, since
in general, $p(\cdot)$ could be modeled as either deterministic functions or
probabilistic densities. However, in this paper \cite{AAE}, we model these
functions as deterministic. Following the original paper, we denote the estimate
of $p(\cdot)$ as $q(\cdot)$. While optimizing the reconstruction loss, the
category GAN and the style GAN regulate the autoencoder's encoder by imposing
the prior distribution of $\mathbf{y}$ and $\mathbf{z}$ onto $q(\mathbf{y})$ and
$q(\mathbf{z})$, where $q(\cdot)$ is defined as,

\begin{equation}
q(\cdot) = \int_{X}q(\cdot|x)\mathbb{P}(\mathbf{x})d\mathbf{x}
\end{equation}

With these considerations, we can write the reconstruction loss as,

\begin{equation}
L = \underset{\mathbf{x} \sim \mathbb{P}(\mathbf{x})}{\mathbb{E}}
\underset{\substack{\mathbf{y} \sim q_{cat}(\mathbf{y}) \\
\mathbf{z} \sim q_{style}(\mathbf{z})}}{\mathbb{E}}
\left\Vert g_{s}(g_{cat}(\mathbf{y}) + g_{style}(\mathbf{z})) - \mathbf{x} \right\Vert^2
\end{equation}
where $g_s$, $g_{cat}$ and $g_{style}$ denote the decoding function of the
shared decoder, the categorical decoder and the style decoder, respectively.

Intuitively, the autoencoder guards the quality of the latent features with its
reconstruction loss. The category GAN pushes the categorical feature vector to
$K$-simplex and the style GAN leads $p(\mathbf{z})$ to a standard Gaussian. In
other words, AAE assigns an expected $N\pi_i$ number of frames into the $i$th bucket
and each bucket has a standard Gaussian shape. Based on the fact that AAE can
impose prior distributions to latent representations successfully, it
can feed samples from the prior distributions to the trained generative decoder
to create new conformations that do not exist in the original trajectories.

A serious drawback of this method is that it needs to have a correct expectation
of the categorical mass $\mathbf{\pi}$, which is usually impossible to estimate
in practice. Intuitively, imposing an incorrect prior knowledge onto the
categorical encodings may harm the reconstruction loss, and therefore the
quality of the clustering. It remains hard to mathematically estimate the
penalty caused by the incorrect prior due to the complex nature of the
optimization of this model. A "work-around" solution to this issue is to define
a large number of small clusters and assign a uniform distribution to them. This
solution can remove the potential penalty to some extent. However, additional
analytical steps are needed for further grouping the resulting small clusters
and, specifically in MD cases, for understanding the estimated transition
probability matrix.

Our proposal is the reparameterization of the categorical prior with a Gumbel-Softmax
distribution \cite{Concrete}\cite{CVAE} to eliminate the need to specify
the categorical mass beforehand. In other words, we propose a special AAE model
for clustering that does not need any prior knowledge of how examples are
distributed among clusters.

\subsection{Adversarial Autoencoder with Gumbel-Softmax}

\subsubsection{Reparameterization Trick}

\textit{Reparameterization Trick} refers to a sampling process in the machine
learning and statistics world. The major aim of using the
reparametrization trick is to separate the stochastic sampling part and
parameter-dependent transformation part of a parameterized distribution. In
stricter terms, it assumes that there is a parameterized distribution
$D(\alpha)$, where $\alpha$ is the parameter. To reparameterize $D(\alpha)$, we
first sample from a unparameterized distribution $Z$ and then apply a
deterministic function that depends on the parameter $f_{\alpha}(\mathbf{z})$ to
the samples to make them as directly  sampled from $D(\alpha)$.

After uncoupling the parameters from the stochastic sampling, the distribution
parameters become trainable model parameters since their gradients can be
estimated by calculating the partial derivative of the loss function with
respect to them. In our case, we rewrite the reconstruction loss by
reparameterizing the categorical prior,

\begin{equation}
L = \underset{\mathbf{x} \sim \mathbb{P}(\mathbf{x})}{\mathbb{E}}
\underset{\substack{\boldsymbol{\alpha} \sim q_{cat}(\boldsymbol{\alpha}) \\
\mathbf{z} \sim q_{style}(\mathbf{z})}}{\mathbb{E}}
f(h(\boldsymbol{\alpha}, \boldsymbol{\pi}), \mathbf{z})
\end{equation}
where $f(\mathbf{y}, \mathbf{z}) = \left\Vert g_{s}(g_{cat}(\mathbf{y}) +
g_{style}(\mathbf{z})) - \mathbf{x} \right\Vert^2$ and we use $h(\boldsymbol{\alpha},
\boldsymbol{\pi})$ to substitute $\mathbf{y}$. In addition, we use
$\boldsymbol{\alpha}$ to represent a random vector with a fixed distribution
and $\boldsymbol{\pi}$ to represent the categorical mass. The function $h(\cdot)$
is the transformation function.

Based on Equation 7, we can obtain the gradient with respect to $\boldsymbol{\pi}$,

\begin{equation}
\nabla_{\boldsymbol{\pi}} L =
\underset{\mathbf{x} \sim \mathbb{P}(\mathbf{x})}{\mathbb{E}}
\underset{\substack{\boldsymbol{\alpha} \sim q_{cat}(\boldsymbol{\alpha}) \\
\mathbf{z} \sim q_{style}(\mathbf{z})}}{\mathbb{E}}
\nabla_{h} f(h(\boldsymbol{\alpha}, \boldsymbol{\pi}), \mathbf{z})
\nabla_{\boldsymbol{\pi}} h(\boldsymbol{\alpha}, \boldsymbol{\pi})
\end{equation}

The \textit{Gumbel-Max Trick}\cite{Gumbel}\cite{ASampling} is a reparameterization
of a one-hot categorical distribution. It samples a $d$-dimensional vector from
a standard Gumbel distribution $\boldsymbol{\alpha}$ and uses $f(\boldsymbol{\alpha})
= \operatorname{onehot}(\operatorname{argmax}(\boldsymbol{\alpha} + \log{
\boldsymbol{\pi}}))$ as the transformation function. Since the $\operatorname{argmax}$
function is not a differentiable function, the gradient cannot propagate to
the parameter $\boldsymbol{\pi}$. The Gumbel-Softmax distribution aims to relax
the Gumbel-Max Trick by replacing the $\operatorname{argmax}$ function with
$\operatorname{softmax}$.

\begin{equation}\label{sample}
y_i = \frac{\exp{((\log{\pi_i} + \alpha_i) / \tau)}}
{\sum_{j=1}^{k}\exp{((\log{\pi_j} + \alpha_j) / \tau)}}
\end{equation}
where $\tau \in (0, +\infty)$ is the temperature parameter.

This distribution was originally introduced as Concrete Distribution by Maddison
et al.\cite{Concrete} and was successfully used in a related work\cite{CVAE}. The
distribution also relieves the constraint that the categorical mass should belong to
the $(k-1)$-dimensional simplex. $\pi_i$s are not necessarily normalized, so they
can be anywhere on $\mathbb{R}_{+}$. When directly learning the logits of $\pi_i$s,
no constraint is needed since the logits belong to $\mathbb{R}$. The extra parameter
$\tau$ grants the distribution a few interesting properties. These properties are
proved in Supplementary Info A.

\begin{prop}\label{prop}
Let $\boldsymbol{y}$ be a random variable sampled from Equation \ref{sample}, where
$\boldsymbol{\pi} \in (0, +\infty)^d$, $\tau \in (0, +\infty)$, then we have, \\
\begin{enumerate}[(1)]
  \item $\forall k$, $\lim\limits_{\tau \rightarrow +\infty} y_k = \frac{1}{d}$. In fact,
  $\lim\limits_{\tau \rightarrow +\infty} y_k$ defines a random variable, which degenerates to a
  constant $\frac{1}{d}$ everywhere in the event space $\Omega$, which is $\mathbb{R}^k$.
  \item $\lim\limits_{\tau \rightarrow 0^{+}} \mathbf{y} \sim \text{OneHotCategorical}
  (\frac{\boldsymbol{\pi}}{\sum_{i=1}^{k}\pi_i})$
  \item $\tau \rightarrow 0^{+}$, $\mathbb{E}{y_i} = \frac{\pi_i}{\sum_{j=1}^{k}\pi_j}$.
\end{enumerate}
\end{prop}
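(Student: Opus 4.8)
\emph{Proof plan.} The plan is to reduce all three parts to one rewriting of Equation~\ref{sample}. Put $a_j := \log\pi_j + \alpha_j$ for $j=1,\dots,d$; since $\boldsymbol{\pi}\in(0,+\infty)^d$ and the standard Gumbel law is supported on all of $\mathbb{R}$, every $a_j$ is finite at every point of the sample space $\Omega=\mathbb{R}^d$. Dividing the numerator and denominator of Equation~\ref{sample} by $\exp(a_k/\tau)$ gives
\begin{equation}
y_k \;=\; \Big(\,\sum_{j=1}^{d}\exp\!\big((a_j-a_k)/\tau\big)\Big)^{-1},
\end{equation}
and each claim will follow from the behaviour of the exponents $(a_j-a_k)/\tau$.

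For part (1) I would fix $\omega$ and send $\tau\to+\infty$: each exponent tends to $0$, so each summand tends to $1$, the denominator tends to $d$, and $y_k\to 1/d$. Since this uses nothing about $\omega$ beyond the finiteness of the $a_j$ --- which holds at every point of $\Omega$ --- the pointwise limit of $\mathbf{y}$ is the constant vector $\tfrac1d\mathbf 1$, i.e.\ a random variable degenerate at $1/d$ on all of $\Omega$.

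For part (2) I would first observe that, the $\alpha_j$ being i.i.d.\ and absolutely continuous, almost surely the $a_j$ have a unique maximizer $k^{\star} := \argmax_j a_j$. On that event, as $\tau\to0^{+}$ the exponents $(a_j-a_{k^{\star}})/\tau$ with $j\neq k^{\star}$ tend to $-\infty$, so the denominator of $y_{k^{\star}}$ tends to $1$ and $y_{k^{\star}}\to1$; while for $k\neq k^{\star}$ the $j=k^{\star}$ summand diverges, so $y_k\to0$. Hence $\mathbf{y}\to\mathbf e_{k^{\star}}$ almost surely, which yields convergence in distribution to the law of $\mathbf e_{k^{\star}}$. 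The remaining step is to identify that law, i.e.\ to prove the Gumbel--Max identity $\mathbb{P}(k^{\star}=i)=\pi_i/\sum_{j}\pi_j$: conditioning on $\alpha_i=t$ and using independence rewrites $\mathbb{P}(a_i\ge a_j\ \forall j\mid\alpha_i=t)$ as $\prod_{j\neq i}F\!\big(t+\log(\pi_i/\pi_j)\big)$ with $F(x)=\exp(-e^{-x})$ the Gumbel CDF, and this product collapses to $\exp\!\big(-e^{-t}\sum_{j\neq i}\pi_j/\pi_i\big)$; integrating against the Gumbel density $e^{-t}\exp(-e^{-t})$ and substituting $u=e^{-t}$ reduces everything to $\int_0^{\infty}\exp\!\big(-u\sum_{j}\pi_j/\pi_i\big)\,du=\pi_i/\sum_{j}\pi_j$. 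This identifies the limiting law as $\text{OneHotCategorical}(\boldsymbol{\pi}/\sum_j\pi_j)$.

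For part (3) I would combine (2) with boundedness: $0\le y_i\le1$ for every $\tau$, and the computation in (2) shows $y_i\to\mathbf 1\{k^{\star}=i\}$ almost surely as $\tau\to0^{+}$, so the bounded convergence theorem gives $\lim_{\tau\to0^{+}}\mathbb{E}[y_i]=\mathbb{P}(k^{\star}=i)=\pi_i/\sum_{j}\pi_j$. The only genuinely computational ingredient is the Gumbel--Max integral in (2); I expect the point requiring the most care is the bookkeeping of which exponentials vanish and which diverge as $\tau\to0^{+}$ (together with the almost-sure uniqueness of the maximizer), since once that is settled the rest is elementary limits plus bounded convergence.
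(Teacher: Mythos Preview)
Your proposal is correct and follows essentially the same route as the paper: the same rewriting of $y_k$ by dividing numerator and denominator by $\exp(a_k/\tau)$, the same Gumbel--Max integral (which the paper isolates as a preliminary lemma, conditioning on $g_{k'}=g$ and substituting $u=e^{-g}$ exactly as you do), and the same pointwise limit analysis as $\tau\to0^{+}$ and $\tau\to+\infty$. The only cosmetic differences are that you dismiss ties upfront via absolute continuity whereas the paper partitions $\Omega$ into the sets $A_k$ of unique argmax and a tie set $A_{>1}$ and then deduces $p(A_{>1})=0$ from $\sum_k p(A_k)=1$, and you justify part~(3) explicitly by bounded convergence where the paper simply declares it a trivial corollary of~(2).
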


\subsubsection{AAE with Gumbel-Softmax}

A general architecture of our model, namely, AAE with Gumbel-Softmax (AAE-GS), is
illustrated in Figure \ref{fig:aae-gs}. By reparameterizing the categorical prior
with the Gumbel-Softmax distribution, we extract the categorical mass as a
trainable variable, which enables us to impose only the "stochastic" part of the
categorical distribution. In this case, the logits of the categorical mass will
be updated each time the autoencoder's loss is updated.

To train the model, instead of assuming a categorical mass, we define an
annealing process of the parameter $\tau$ in Equation (\ref{sample}). First,
we set $\tau$ to a large value to simulate $\tau \rightarrow
+\infty$. By proposition \ref{prop}, we know the model will approximately
put every example in only one cluster, since the categorical representation for
each example is almost same. We then anneal $\tau$ to a value close to zero in
some finite number of steps so that eventually the model can assign each example a one-hot
vector marking the most suitable bucket to put it in.

\begin{figure}[h]
  \centering
  \includegraphics[width=5in]{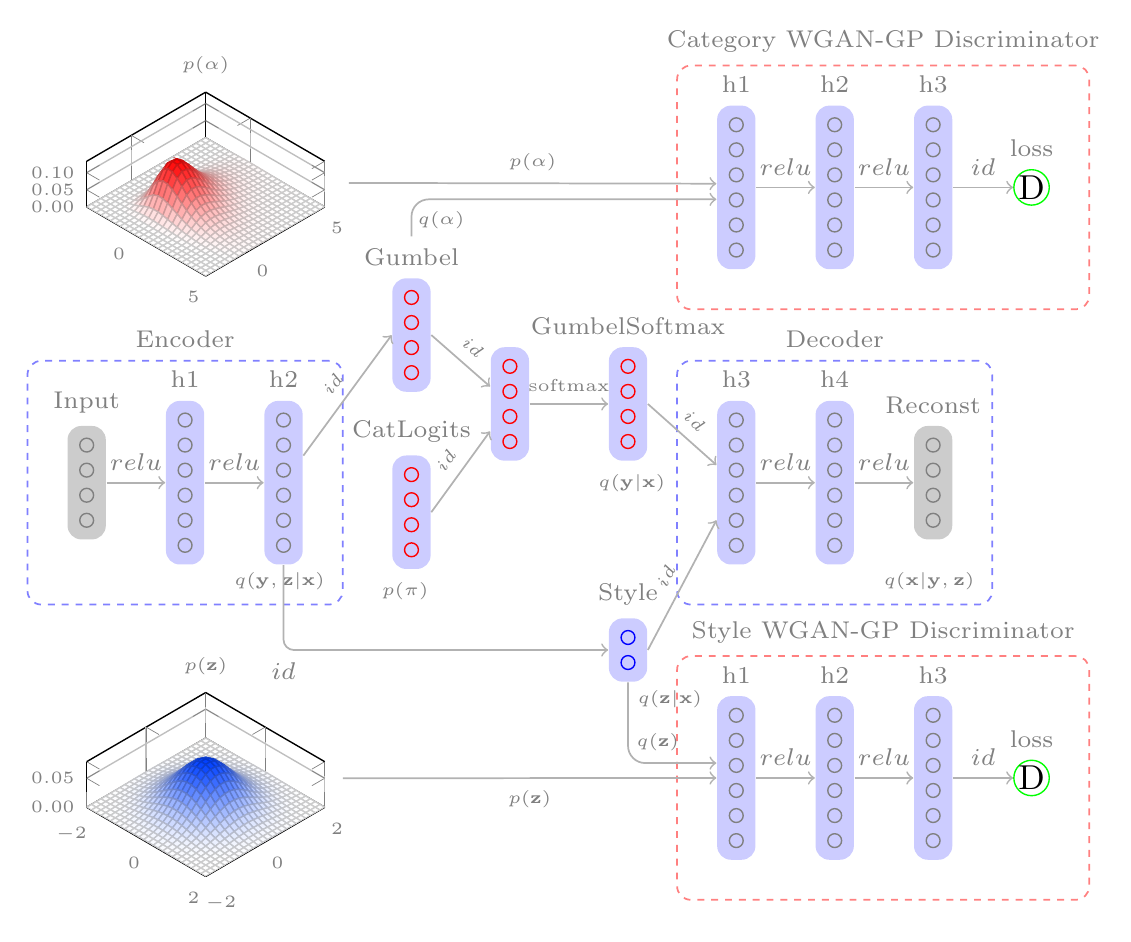}
  \caption{A general architecture of AAE with Gumbel-softmax. The architecture consists
  of four major components: the encoder, the decoder, the discriminator for categorical
  GAN and the discriminator for style GAN. Each component is represented as a neural network,
  which is represented as a multilayer perceptron in this figure. A general choice of
  activation functions applied to all layers are marked above the arrows connecting all
  layers. "relu" is short for Rectifier Linear Unit (ReLU) and "id" is short for the identity
  function. "h1", "h2", etc represent hidden layers in a neural network component. The
  middle row of the architecture essentially forms a deep autoencoder whose deep representation
  is divided into a categorical part (red circles) and a style part (blue circles). The
  encoder combined with the top row forms the categorical GAN and with the bottom row
  forms the style GAN. $p(\alpha)$ illustrates a Gumbel(0, I) distribution and $p(\mathbf{z})$
  illustrates a standard Gaussian.}
  \label{fig:aae-gs}
\end{figure}

\section{Results}

\subsection{Trp-Cage}

Unfortunately, a standard dataset for examining the clustering performance of
different algorithms for MD frames has not been established yet. To test our
algorithms, a model trajectory with the following properties was required: 1) It
samples \textit{a small protein}, 2) It is a long enough simulation that
traverses a large scope of the protein's conformation space, 3) It samples
conformations of the protein that are easily distinguishable by the human eye.

A few years ago, Lindorff-Larsen et al. \cite{fastfolding} did a thorough MD
based folding study on 12 small proteins with pure MD simulations. It was
observed that each protein folded to a conformation very close to its
experimentally determined structure during those simulations, and that the
process of folding occurred many times during the simulation. There were 9
fast-folding proteins amongst the 12 selected by Lindorff-Larsen et al. We
picked a small protein, \textit{Trp-cage} (PDB ID: 2JOF) from these 9
fast-folders.  Trp-cage is a 20-residue protein that folds as two small
$\alpha$-helices and a short loop. The fold is stabilized by having a tryptophan
residue (Trp6) at the core of the protein with its indole side chain sandwiched
between two prolines (Pro12 and Pro18) and surrounded on the other sides by a
tyrosine (Tyr3, a leucine (leu7) and another proline (Pro19). This protein has a
folding time of 14 $\mu$s and was simulated for a total of 208 $\mu$s.  Compared
to other fast-folding proteins provided by the study, the folded structure of
Trp-cage has clear secondary structure elements that can fold into different
conformations even though its length is relatively small.  Additionally, its
short folding time makes the size of the trajectory compact.  For our purposes,
this protein is an ideal model for testing the clustering algorithms. The
trajectory was generously provided by \textit{D.E. Shaw Research}.

The trajectory we obtained contains a total of 1,044,000 frames assembled in 105
shards. In our clustering experiments, we only considered the backbone coordinates
of the protein. Therefore, we sliced each frame by the backbone selection. The
selection includes 80 atoms of Trp-Cage so that flattening all the coordinates
per frame of these atoms results in a 240-dimensional array. The full trajectory
was aligned to a specific frame and translated into TFRecords format
specifically designed for Tensorflow\cite{Tensorflow} Framework and resharded
into 10 shards. We used this dataset for all our experiments.

\subsection{Clustering with AAE}

\begin{figure}[h]
  \centering
  \includegraphics[width=5.5in]{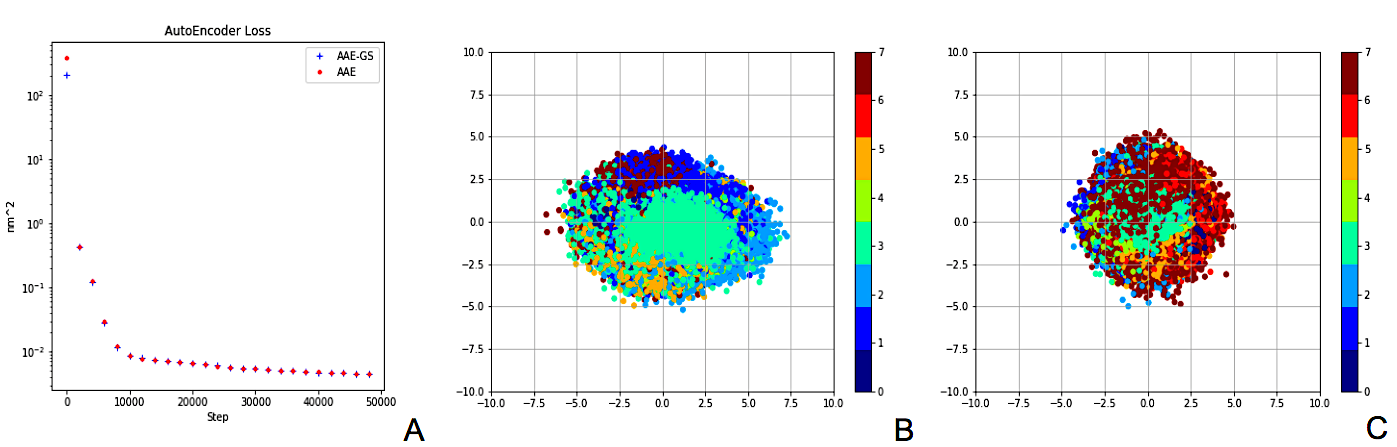}
  \caption{Convergence of AAE and AAE-GS clustering experiments.
  The restruction losses plotted in A of both experiments were evaluated every 2000 steps.
  B and C plots the first two dimensions of the "learned" style vectors of all frames
  in AAE and AAE-GS experiments, respectively.}
  \label{convergence}
\end{figure}

We performed a clustering experiment on our dataset using the general
unsupervised architecture of AAE, assuming the categorical prior obeys the
discrete uniform distribution and using 8 as the number of clusters. We set the
dimension of the style representation to 16 and all its components as
independent identically distributed (i.i.d) standard Gaussians. The
hyperparameters and the architecture used for this experiment are listed in
Table 1 of Supplementary Info B. We trained the model until the autoencoder's loss
converged, which took $>24$ hours on one NVIDIA Tesla K80 GPU card. (See Figure
\ref{convergence} A). The final reconstruction loss was 0.005 $nm^2$, so the
average difference between the coordinate of any atom and that in its
reconstruction of that atom is approximately 0.7 $\angstrom$.

\begin{figure}[h]
  \centering
  \includegraphics[width=5in]{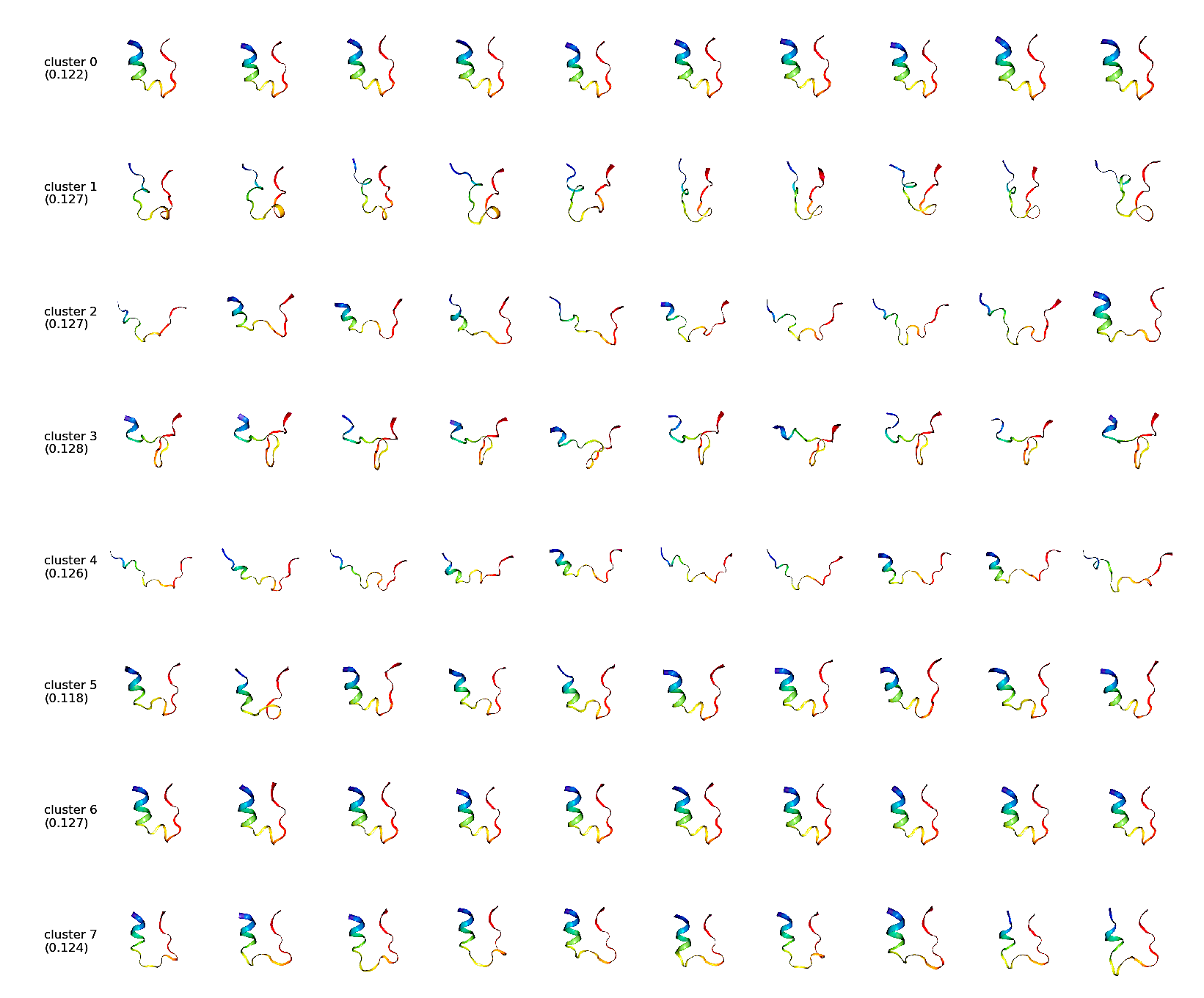}
  \caption{Visualization of cluster representatives obtained from AAE experiment.
   The first ten frames of each cluster sorted by the norm of the style vector
   are shown in each row. All structures are oriented to a fixed orientation.
   The estimated probability mass for each cluster is listed
   in brackets underneath each cluster id label.}
  \label{aae_clustering_result}
\end{figure}

To better visualize frames in each cluster, we sorted the frames in each cluster
by the norm of their style vectors from small to large,
since we consider that the smaller the norm of the noise the more representative a
frame is (Figure \ref{aae_clustering_result}).

The results show that the uniform categorical prior is successfully imposed on
the categorical representation $\mathbf{y}$. The first two dimensions of the
style representation for all frames are plotted in Figure \ref{convergence} B,
which shows that the style representation obeys a Gaussian distribution. From
the visualization of frames in every cluster, we found that the algorithm
identified several featured conformations, for example, cluster 0, 5, 6 and 7
being identified as well-packed conformations, cluster 4 as the unfolded
conformation and cluster 3 as some interesting intermediate state on its way to
a fully fold conformation. We notice that the frames in cluster 0, 5, and 6 are
quite similar and could be treated as duplicate clusters. Intuitively this makes
sense, because the folded structure has a greater population among all states
since, in the conditions of the simulation, it is energetically stable but we
forced every cluster contain the same number of frames.

\subsection{Clustering with AAE-GS}

Using the same assumption about the distribution of the style vector and its
dimension as in the previous experiment, we did a clustering experiment using
AAE with Gumbel-Softmax. We use a similar architecture as the previous
experiment except that we included the reparametrization with Gumbel-Softmax
(See Supplementary Info C). Based on the model of AAE-GS, we define an annealing
process of the parameter $\tau$ (See Figure 1 in Supplementary Info) to cool
down the categorical representation from a uniform vector to a one-hot vector.
We trained the model until convergence. AAE-GS takes similar amount of time as
the AAE experiment using a similar model architecture. The convergence of the
reconstruction loss and the successful imposition of the Gaussian noise are
shown in Figure \ref{convergence} A and C.

\begin{figure}[h]
  \centering
  \includegraphics[width=5in]{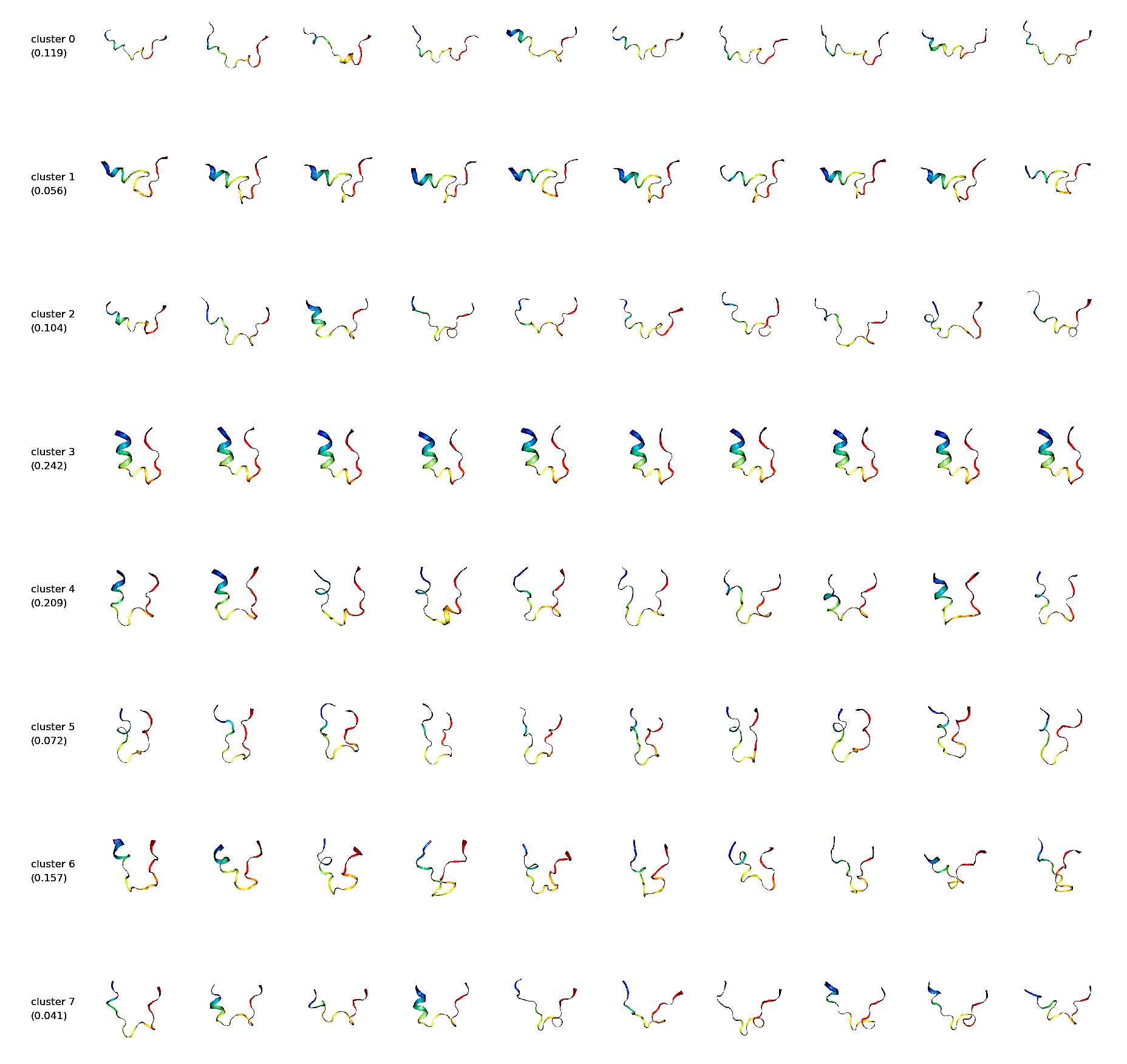}
  \caption{Visualization of cluster representatives obtained from AAE-GS experiment.}
  \label{aae_gs_clustering_result}
\end{figure}

This experiment shows that the folded structures are clustered into only one
cluster (cluster 3) instead of three duplicated clusters obtained in the AAE
experiment. Cluster 3 contains frames with the same characteristics of the
folded structure: Trp6 is caged by residues Pro12, Pro18, Tyr3, Leu7 and Pro19.
(See Figure \ref{fig:structural}) The probability mass for this folded cluster
is approximately twice the mass of the same cluster in the AAE experiment. It
seems that the AAE-GS algorithm tends to aggregate all folded structures within
those duplicate clusters into one. However, the probability mass was doubled,
not tripled, because the structures in a AAE cluster may not be strictly folded
as the norm of the style vector gets larger. The algorithm can categorize those
loosely packed structures into other clusters. By comparing all 8 clusters in
Figure \ref{aae_gs_clustering_result}, we found there are no obvious duplicated
clusters.

\subsection{Generation of fake frames}

As explained in the Theory section, with a trained decoding function and latent
variables with known distributions, we can generate fake frames that do not
exist in the original trajectory. Figure \ref{fakeframes} shows a few fake frame
examples generated from cluster 4 using the AAE-GS model. The sampled style
vector used to generate the fake frame is shown below the structure.

\begin{figure}[h]
  \centering
  \includegraphics[width=4in]{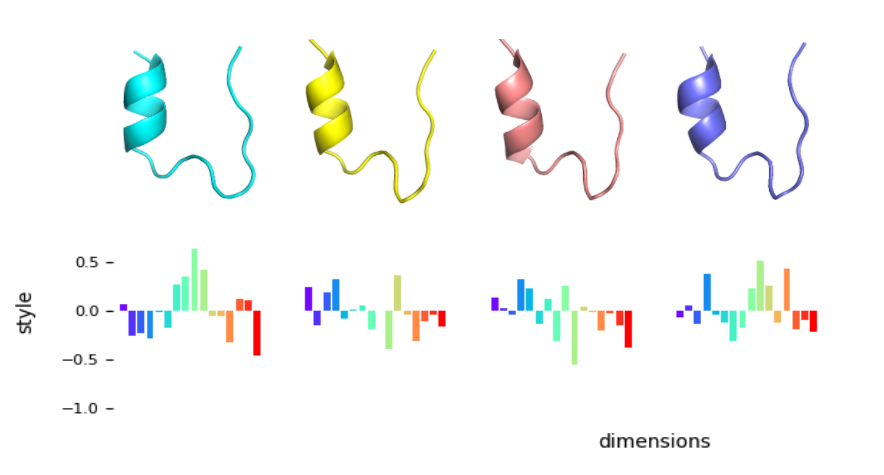}
  \caption{Examples of fake frames generated from the AAE-GS model. The
  corresponding Gaussian noise vectors are plotted as bar plots. All 16 elements
  of the style representation are spread out along the horizontal
  axis of each bar plot. Different dimensions are plotted in different colors.}
  \label{fakeframes}
\end{figure}

The reconstruction quality can be furthered improved by optimizing the
hyperparameters of the model components. For instance, we could add Convolution
Neural Networks (CNN) to our model to increase the representation power of
all components, which has proven to be very effective in computer vision tasks.
Generation of fake frames can enrich the existing dynamics and could be useful
for searching for new conformations with specific biophysical relevance.

\subsection{An application}

To evaluate the significance of the clusters identified in the long Trp-Cage MD
trajectory by our AAE-GS computation we assumed that they represent
significant intermediates in the folding of the small protein. Several groups
have studied the folding of Trp-Cage using MD simulations
\cite{trpcage1}\cite{trpcage2}\cite{trpcage3}\cite{trpcage4}\cite{trpcage5}.
Some of these studies used clustering and other algorithms to infer a possible
folding pathway. In our clustering, it is clear that cluster 0 corresponds to
the unfolded state (U) and that cluster 3 corresponds to the fully folded
protein (F). The significance of these and the other clusters was explored by
using all the clusters as the states of a Markov Model (MM). This computation
provided the probabilities of transition between the different states, estimated
as
\begin{equation}
p(s_i, s_j) = \frac{c(s_i, s_j)}{\sum_k\sum_l c(s_k, s_l)}
\end{equation}
where $c(s_i, s_j)$ represents the count of the transitions from
state $i$ to state $j$. It is worth mentioning that $p(s_i, s_j)$ is not a joint
probability since $i$ and $j$ are in order. The transition probabilities are
defined in such a way that the proportions of each state is reflected.

As seen in Figure \ref{transitions}, some states have only a single probability
of transition (besides the return to the same state) while others have
transitions to more than one additional state. Based on their probabilities of
transition the states can be connected in a way that reflects possible pathways
from U (state 0) to F (state 3). In this scheme (Figure \ref{transitions}),
state 2 is a required intermediate in the folding process.  In this state, the
helix is starting to form.  The two prolines (Pro12 and Pro18) are pointing
towards each other and, with Tyr3 they are starting to form the "cage".  The
tryptophan (Trp6) is still outside the cage but as the helix continues to grow
past residue 5, the tryptophan will start to become the center of the cage. It
can reach the folded state directly but it can also transition to states 1, 4 or
7. State 1 is a dead-end that has to return to state 2 to reach the folded state
3.  State 7 can reach the folded state through two paths: returning to state 2
or transition to state 4 that in turn can reach the folded state 3.  State 4 has
the largest number of significant transition probabilities: to states 2, 7, 5
and to the folded state (3). It is the last step before reaching the final fold.
It has most residues in the correct conformation and would require only an
\textasciitilde $100^{\circ}$ counterclockwise rotation of the helix and small
rearrangement to reach the folded state. State 5 is another dead-end. It
apprears that in state 5, the protein is attempting to fold as an antiparallel
hairpin. Since this arrangement does not lead to favorable interactions nor can
it evolve into the final fold.  It has to return to state 4 to reach the folded
state. State 6 is interesting in that it is a not fully-folded conformation that
is only accessible from the folded state and may represent a partially unfolded
state in equilibrium with the folded state in the conditions of the simulation.
This state 4 has a high probability of transition to itself, second only to that
of the folded state. The structural descriptions for states mentions above are
reflected in Figure \ref{fig:structural}.

\begin{figure}
  \centering
  \includegraphics[width=5in]{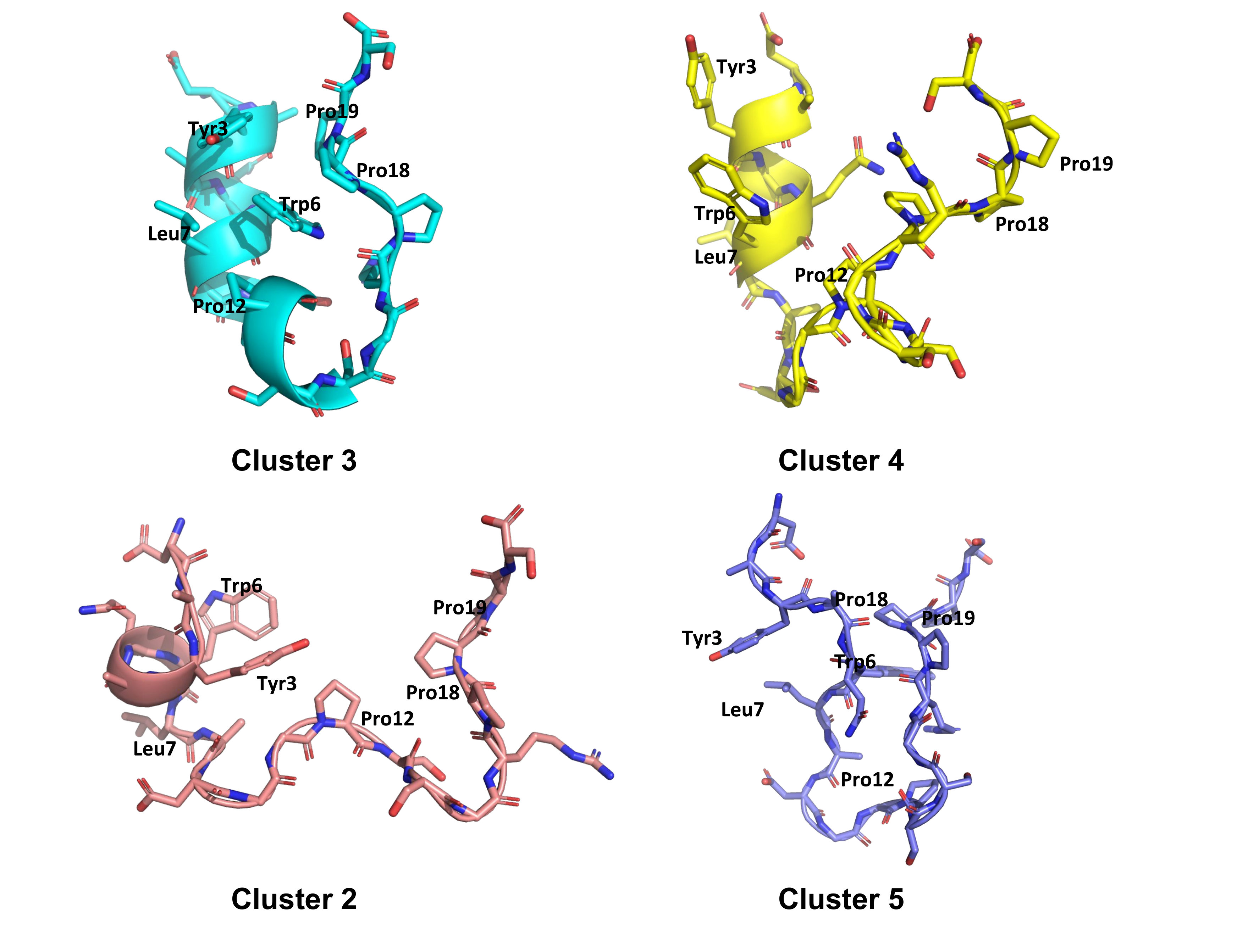}
  \caption{Structural descriptions for selective states. The backbone of the
  protein is plotted with the representation Cartoon and key residues are shown
  with balls and sticks.}
  \label{fig:structural}
\end{figure}

The combination of AAE-GS-identified clusters with a MM using
the clusters as the model states, represents an objective analysis of an MD
trajectory that provides a highly interpretable model of the folding pathway
described by the simulation. The same combination of MD simulations, AAE-GS
clustering and MM can be used to identify transitions in enzymes, transporters,
channels and others proteins and can become an unbiased procedure to gain
insight about the mechanism these proteins.

\begin{figure}[h]
  \centering
  \includegraphics[width=5in]{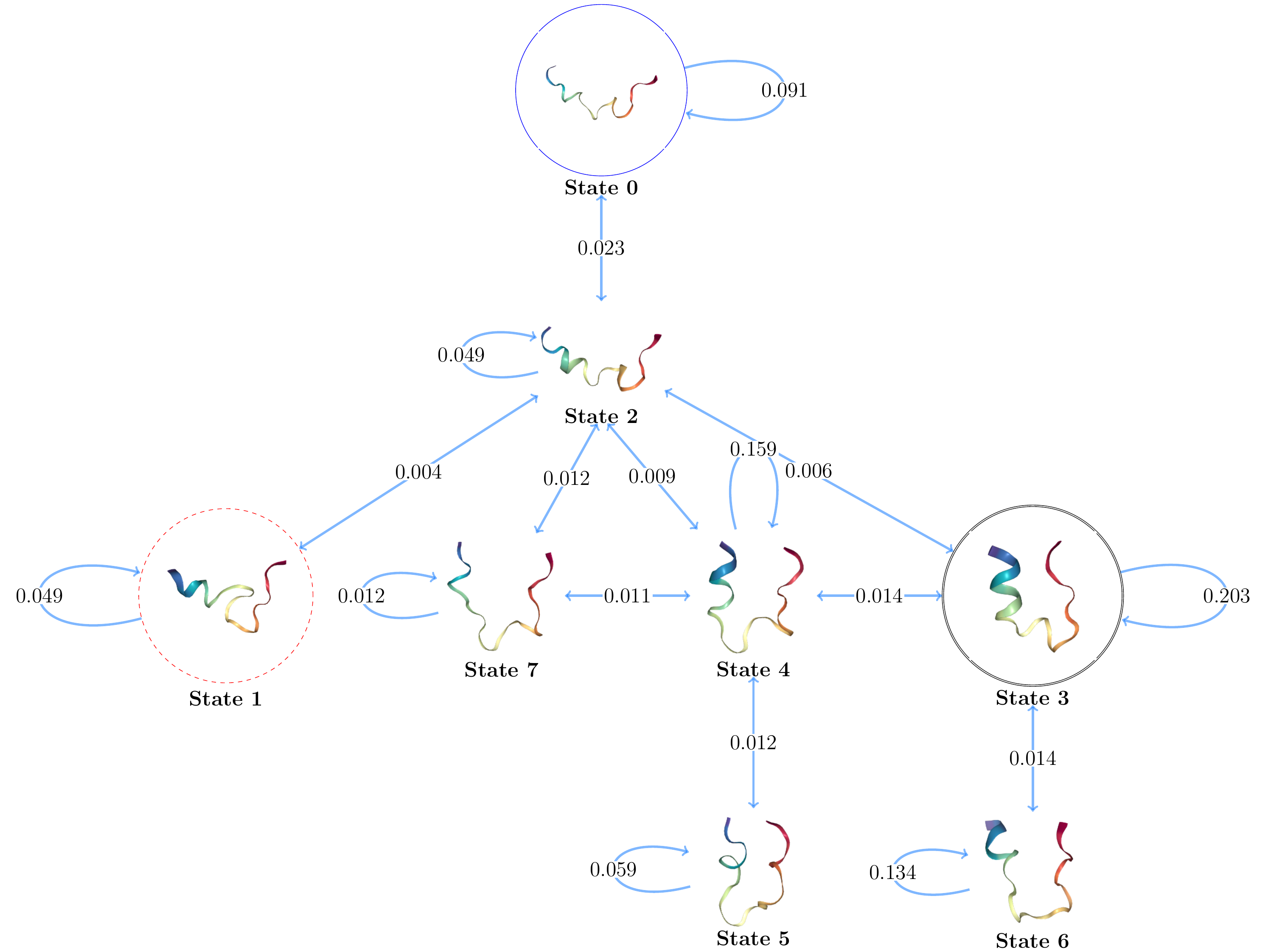}
  \caption{Transitions between different states of Trp-cage. The first frame
  in each cluster is used as the representation of the cluster. The unfolded
  state is circled in blue. The perfect folded state is double circled in black.
  A misfolded state (State 1) is circled in red dashes. All transition probabilities
  are sit in the middle of the arrow. Arrows with a transition probability
  lower than 0.005 are pruned.}
  \label{transitions}
\end{figure}

\section{Discussion and Conclusion}

Unlike clustering images, clustering MD frames is intrinscally hard since the
frame set, which usually comes from trajectories of MD simulation, contains all
intermediate conformations of a continuous transformation from one featured
state of the macromolecule to another. Clustering such continuous paths is
equivalent to determining boundaries at some points of the paths. Consequently, assigning the
frames within some neighborhood of the boundaries is somewhat arbitrary. In more formal
terms, as our decoder is actually a continuous function, frames with smaller
$|\mathbf{z}|$ reveal the specific features of a cluster better than those with
larger $|\mathbf{z}|$. For those frames with large $|\mathbf{z}|$, the algorithm
is less accurate about their assignment. This indetermination could be resolved
by having more clusters, however, more clusters will make features attributed to
each cluster less distinguishable. Figures 3, 4 in Supplementary Info plot a
random selection of frames from the first 10,000 frames of each cluster with the
same clustering results from the AAE and AAE-GS experiments, respectively. We
found that the majority of frames in the random selection still preserve
specific features that identified their clusters.

From a technical perspective, the general architecture we propose for clustering
could be enhanced by involving CNNs and other types of networks. The
hyperparameters of all network components are subject to automatic tuning in
order to achieve the smallest reconstruction error. Currently, a major caveat of
training our model is the lack of robustness of the two GAN components. However,
as more and more efforts\cite{WGAN}\cite{WGAN_GP} on stablizing GAN are brought
into play in the machine learning community, we expect to have a stable way of
training GAN components and tuning the hyperparameters of the discriminators.

We showed that the AAE-GS algorithm we propose is effective in clustering MD
accessed conformations and it naturally provides a quantity $|\mathbf{z}|$ that
represents the loyalty of a frame to the cluster it belongs to. In addition, we
showed that our clustering results can be useful in revealing and quantifying
important biophysical properties of macromolecules, such as folding landscapes,
allosteric rearrangements and transitions among active and inactive
conformations.

\subsubsection*{Acknowledgments}

We acknowledge Maryland Advanced Research Computing Center (MARCC) for
providing hardware resource and support. We thank D.E. Shaw Research for
kindly providing us the 208 $\mu$s trajectory of Trp-Cage.

\clearpage

\clearpage
\begin{center}
\textbf{\large Supplemental Info: Conformation Clustering of Long MD Protein
  Dynamics with an Adversarial Autoencoder}
\end{center}

\setcounter{equation}{0}
\setcounter{figure}{0}
\setcounter{table}{0}
\setcounter{page}{1}
\setcounter{section}{0}
\makeatletter
\renewcommand{\theequation}{S\arabic{equation}}
\renewcommand{\thefigure}{S\arabic{figure}}
\renewcommand{\bibnumfmt}[1]{[S#1]}
\renewcommand{\citenumfont}[1]{S#1}
\renewcommand\thesection{\Alph{section}}

\section{Proof of Proposition 2.1}

\begin{lemma}\label{l1}
Let $\mathbf{g}$ be a $d$-dimensional random vector and all its components are independent
identical distributed (i.i.d) random variables
with a $\text{Gumbel }(0, 1)$ distribution.
$\boldsymbol{\pi}$ be a $d$-dimensional vector in $(0, +\infty)^d$ and $\tau$ is an arbitrary
number in $R_{>0}$. Let's define a $d$-dimensional random vector $\mathbf{y}$ by transforming
$\mathbf{g}$ with function $\mathbf{f}$, where $\mathbf{f}$: $R^d \rightarrow R^d$.
The $k$th element of $\mathbf{f}$ is
\begin{equation}\label{sample}
y_k = f_k(\mathbf{g}) = \frac{\exp{((\log{(\pi_k)} + g_k) / \tau)}}
{\sum_{k} \exp{((\log{(\pi_k)} + g_k) / \tau)}}
\end{equation}
where $y_k$, $g_k$ and $\pi_k$ are the $k$th element of $\mathbf{y}$, $\mathbf{g}$ and
$\boldsymbol{\pi}$, respectively. Then we have, $\forall k' \in \left\{1, \dots, d\right\}$,

\begin{equation}
p(y_{k'} = \max_k y_k \mid \boldsymbol{\pi}, \tau) = \frac{\pi_{k'}}{\sum_k {\pi_k}}
\end{equation}
\end{lemma}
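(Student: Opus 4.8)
The plan is to exploit the fact that the softmax map in \eqref{sample} is monotone in each coordinate, so that the event $\{y_{k'} = \max_k y_k\}$ coincides with the event $\{\log \pi_{k'} + g_{k'} = \max_k (\log \pi_k + g_k)\}$, i.e. the index achieving the maximum of $y$ is the same as the index achieving the maximum of the shifted Gumbel variables $g_k + \log \pi_k$. Since $\tau>0$ only rescales by a strictly increasing function and the denominator is common to all coordinates, $\tau$ plays no role in determining the argmax; hence the probability we want equals $p\bigl(k' = \argmax_k (g_k + \log\pi_k)\bigr)$, and the $\tau$-dependence drops out entirely. This reduces the lemma to the classical Gumbel-Max fact cited in the paper (\cite{Gumbel}\cite{ASampling}), but I would include the short computation for completeness.

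First I would record the CDF and PDF of a $\mathrm{Gumbel}(0,1)$ variable, $F(t) = \exp(-e^{-t})$ and $F'(t) = e^{-t}\exp(-e^{-t})$, and note that $g_k + \log\pi_k$ has CDF $\exp(-\pi_k e^{-t})$ (a $\mathrm{Gumbel}(\log\pi_k,1)$ location shift). Then I would condition on $g_{k'} = t$ and write
\begin{equation}
p\bigl(k' = \argmax\bigr) = \int_{-\infty}^{\infty} f_{k'}(t) \prod_{k \ne k'} P\bigl(g_k + \log\pi_k < t + \log\pi_{k'}\bigr)\, dt,
\end{equation}
where $f_{k'}$ is the density of $g_{k'} + \log\pi_{k'}$. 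Each factor in the product is $\exp(-\pi_k e^{-(t+\log\pi_{k'})}) = \exp(-\pi_k e^{-t}/\pi_{k'})$, and $f_{k'}(t) = \pi_{k'} e^{-t}\exp(-\pi_{k'} e^{-t})$ after the shift — actually it is cleanest to substitute $u = e^{-t}$ (so $du = -e^{-t}\,dt$) early. After the substitution the integrand collapses to $\pi_{k'}\exp\bigl(-u\sum_k \pi_k\bigr)$ integrated over $u \in (0,\infty)$, which evaluates to $\pi_{k'} / \sum_k \pi_k$.

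The only genuine subtlety — and the step I would be most careful about — is the reduction from the softmax argmax to the plain argmax: I must argue that ties (two coordinates achieving the maximum simultaneously) occur with probability zero, so that "$\argmax$" is almost surely well-defined and the event decomposition above is exhaustive and disjoint over $k'$. This follows because $g_j - g_k$ is a continuous random variable (a logistic distribution) for $j \ne k$, hence $P(g_j + \log\pi_j = g_k + \log\pi_k) = 0$; a union bound over the finitely many pairs handles it. Everything else is a routine one-variable integral, and the final check that $\sum_{k'} \pi_{k'}/\sum_k \pi_k = 1$ confirms consistency. Proposition~\ref{prop} then follows from this lemma together with standard limiting arguments as $\tau \to 0^+$ and $\tau \to +\infty$, which I would treat separately.
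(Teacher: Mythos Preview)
Your proposal is correct and follows essentially the same route as the paper: reduce the softmax argmax to the argmax of the shifted Gumbels $g_k+\log\pi_k$ (using monotonicity of $x\mapsto e^{x/\tau}$ and the common denominator), condition on the value of the $k'$-th variable, multiply the remaining Gumbel CDFs, and evaluate the resulting one-dimensional integral via the substitution $u=e^{-t}$ (the paper writes it as $g=-\log t$). One small cleanup: in your displayed integral you say you condition on $g_{k'}=t$ but then take $f_{k'}$ to be the density of $g_{k'}+\log\pi_{k'}$, so the factors $P(g_k+\log\pi_k<t+\log\pi_{k'})$ should read $P(g_k+\log\pi_k<t)$ --- pick one convention and the algebra you sketch goes through exactly as you describe.
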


\begin{proof}
Let's first consider the conditional probability of $y_{k'} = \max_k y_k$ given $\boldsymbol{\pi}$, $\tau$
and $g_{k'} = g$. Since $g_k$s are i.i.d. and $y_{k'}$ is the maximum among all elements of $\mathbf{y}$,
we have,

\begin{equation}
\begin{split}
p(y_{k'} = \max_k y_k \mid \boldsymbol{\pi}, \tau, g_{k'}=g) & =
\prod_{k \neq k'} p(y_k \leq y_{k'}) \\
 & = \prod_{k \neq k'} p(\log{\pi_k} + g_k \leq \log{\pi_{k'}} + g) \\
 & = \prod_{k \neq k'} p(g_k \leq \frac{\pi_{k'}}{\pi_k} + g)
\end{split}
\end{equation}

Since $g_k \sim \text{Gumbel}(0, 1)$ and the cumulative distribution function of
Gumbel(0, 1) is $e^{-e^{-x}}$, $x \in \mathbb{R}$, we substitute
$p(g_k < \frac{\pi_{k'}}{\pi_k} + g)$ with $\exp{(-\exp{(-\log{(\frac{\pi_{k'}}{\pi_k})} - g)})}$.
Then we have,

\begin{equation}
\begin{split}
p(y_{k'} = \max_k y_k \mid \boldsymbol{\pi}, \tau, g_{k'}=g) & =
\prod_{k \neq k'} \exp{(-\exp{(-\log{(\frac{\pi_{k'}}{\pi_k})} - g)})} \\
 & = \exp{\sum_{k \neq k'} (-\exp{(-\log{(\frac{\pi_{k'}}{\pi_k})} - g)})} \\
 & = \exp{\sum_{k \neq k'} (-\frac{\pi_{k}}{\pi_{k'}} e^{-g})} \\
 & = \exp{(-e^{-g} \frac{\sum_{k \neq k'} \pi_k}{\pi_{k'}})}
\end{split}
\end{equation}

The probability density function of Gumbel(0, 1) is $e^{-x-e^{-x}}$, where $x \in \mathbb{R}$.
Marginalize out the $g_{k'}$, we have,

\begin{equation}\label{marginalize}
\begin{split}
p(y_{k'} = \max_k y_k \mid \boldsymbol{\pi}, \tau) & =
\int_{g_{k'}} p(y_{k'} = \max_k y_k \mid \boldsymbol{\pi}, \tau, g_{k'}=g) p_{g_{k'}}(g) dg \\
 & = \int_{-\infty}^{+\infty} \exp{(-e^{-g} \frac{\sum_{k \neq k'} \pi_k}{\pi_{k'}})} e^{-g-e^{-g}} dg \\
 & = \int_{-\infty}^{+\infty} \exp{(-g-e^{-g}\frac{\sum_k{\pi_k}}{\pi_{k'}})} dg
\end{split}
\end{equation}

Let $C = \frac{\sum_k{\pi_k}}{\pi_{k'}}$, the improper integral in the last step
of Equation \ref{marginalize} can be rewritten as
$\int_{-\infty}^{+\infty} \exp{(-g-Ce^{-g})} dg$.

This integral can be easily calculated by substituting $g$ with $-\log{t}$. We have,
\begin{equation}
\begin{split}
\int_{-\infty}^{+\infty} \exp{(-g-Ce^{-g})} dg & = \int_{0}^{+\infty} e^{-Ct} dt =  \\
 & = \Eval{-\frac{1}{C}e^{-ct}}{0}{+\infty} \\
 & = \frac{1}{C}
\end{split}
\end{equation}

Subsituting $C$ back to $\frac{\sum_k{\pi_k}}{\pi_{k'}}$ completes the proof.
\end{proof}

\begin{prop}
Let $\boldsymbol{y}$ be a random variable sampled from Equation \ref{sample}, where
$\boldsymbol{\pi} \in (0, +\infty)^d$, $\tau \in (0, +\infty)$, then we have, \\
\begin{enumerate}[(1)]
  \item $\forall k$, $\lim\limits_{\tau \rightarrow +\infty} y_k = \frac{1}{d}$. In fact,
  $\lim\limits_{\tau \rightarrow +\infty} y_k$ defines a random variable, which degenerates to a
  constant $\frac{1}{d}$ everywhere in the event space $\Omega$, which is $\mathbb{R}^k$.
  \item $\lim\limits_{\tau \rightarrow 0^{+}} \mathbf{y} \sim \text{OneHotCategorical}(\frac{\boldsymbol{\pi}}{\sum_{i=1}^{k}\pi_i})$
  \item $\tau \rightarrow 0^{+}$, $\mathbb{E}{y_i} = \frac{\pi_i}{\sum_{j=1}^{k}\pi_j}$.
\end{enumerate}
\end{prop}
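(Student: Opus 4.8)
The plan is to dispatch the three claims by direct limit computations on the explicit formula \eqref{sample}, handling (1) with an elementary $\tau\to\infty$ estimate, deriving (2) from Lemma~\ref{l1}, and deducing (3) from (2) by reading off the one-hot expectation.

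For part (1), I would divide numerator and denominator of $y_k$ by the common factor and write
\begin{equation*}
y_k = \frac{1}{\sum_{j=1}^{d} \exp\bigl((\log\pi_j - \log\pi_k + g_j - g_k)/\tau\bigr)}.
\end{equation*}
For every fixed realization $\mathbf g \in \mathbb R^d$ and fixed $\boldsymbol\pi$, each exponent $(\log\pi_j-\log\pi_k+g_j-g_k)/\tau \to 0$ as $\tau\to\infty$, so each summand tends to $1$ and the denominator tends to $d$; hence $y_k\to 1/d$ pointwise on $\Omega=\mathbb R^d$. Since the pointwise limit is the constant function $1/d$, the limiting random variable is degenerate at $1/d$, which is exactly the stated claim. (If one wants convergence of the random variables in a probabilistic sense, pointwise convergence on all of $\Omega$ immediately gives almost-sure convergence and hence convergence in distribution.)

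For part (2), the key observation is that the $\mathrm{softmax}$ in \eqref{sample} is a temperature-controlled smoothing of $\mathrm{argmax}$: for a fixed draw $\mathbf g$ for which the values $\log\pi_j + g_j$ have a unique maximizer $k^\ast$ (which happens with probability one, since the $g_j$ are continuous i.i.d.\ so ties have probability zero), the same division trick shows every exponent for $j\neq k^\ast$ has a strictly negative limit argument, so those summands vanish as $\tau\to 0^+$ and $y_{k^\ast}\to 1$ while $y_j\to 0$ for $j\neq k^\ast$. Thus $\mathbf y \to \mathrm{onehot}(\mathrm{argmax}_k(\log\pi_k+g_k))$ almost surely, and therefore in distribution. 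By Lemma~\ref{l1}, $p(\,\mathrm{argmax}_k(\log\pi_k+g_k)=k'\,) = p(y_{k'}=\max_k y_k \mid \boldsymbol\pi,\tau) = \pi_{k'}/\sum_k\pi_k$, so the limiting one-hot vector has exactly the categorical law $\mathrm{OneHotCategorical}(\boldsymbol\pi/\sum_i\pi_i)$, as claimed. Part (3) is then immediate: for a one-hot vector $\mathbf y$ with $p(y_i=1)=\pi_i/\sum_j\pi_j$ we have $\mathbb E[y_i] = p(y_i=1) = \pi_i/\sum_j\pi_j$; combined with the convergence in (2) and boundedness of $y_i\in[0,1]$ (dominated convergence), $\mathbb E[y_i]\to \pi_i/\sum_j\pi_j$ as $\tau\to 0^+$.

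The main obstacle — really the only non-cosmetic point — is justifying that the pointwise (a.s.) limits above upgrade to the mode of convergence the proposition intends, and in part (2) that one may invoke Lemma~\ref{l1} to identify the index distribution despite the lemma being phrased for fixed $\tau>0$ rather than in the limit. This is handled cleanly by noting that $\{y_{k'} = \max_k y_k\} = \{\log\pi_{k'}+g_{k'} = \max_k(\log\pi_k+g_k)\}$ as an event not depending on $\tau$ (the $\mathrm{softmax}$ is strictly monotone in its inputs for any finite $\tau$), so Lemma~\ref{l1}'s conclusion is in fact a statement purely about $\mathrm{argmax}$ of the Gumbel-perturbed logits, and it transfers verbatim to the $\tau\to0^+$ limit. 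The measure-zero tie set and the dominated-convergence step for (3) are the only other things to mention, and both are routine.
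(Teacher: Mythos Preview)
Your proposal is correct and follows essentially the same route as the paper: pointwise limits via the division trick for (1) and (2), Lemma~\ref{l1} to identify the argmax law, and (3) as an immediate corollary. The only cosmetic difference is in how the tie set is dismissed in (2): you invoke continuity of the Gumbel variables directly to get a null set, whereas the paper partitions $\Omega$ into the unique-argmax sets $A_k$ and a tie set $A_{>1}$, then deduces $p(A_{>1})=0$ from $\sum_k p(A_k)=1$ via Lemma~\ref{l1}; your observation that the event $\{y_{k'}=\max_k y_k\}$ is $\tau$-independent is exactly what the paper establishes by its set-inclusion argument.
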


\begin{proof}
\begin{enumerate}[(1)]
\item
$\forall \omega \in \Omega$, $\forall k \in \left\{1, \dots, d\right\}$,
$|g_k| < +\infty$, so as $|\log{(\pi_k)} + g_k| < +\infty$. In this case, when $\tau \rightarrow +\infty$,
$(\log{(\pi_k)} + g_k) / \tau \rightarrow 0$ and $\exp{((\log{(\pi_k)} + g_k) / \tau)} \rightarrow 1$.
It is trivial to see $y_k = \frac{1}{d}$.

\item

Since $\mathbf{g}$ is a random vector defined on the event space $\mathbb{R}^d$,
which is denoted as $\Omega$ in the following text, $\mathbf{y} = \mathbf{f}
\circ \mathbf{g}$ is also defined on $\Omega$. Let's start by defining a
partition of $\Omega$. \\

Consider a series of subsets of $\Omega$, denoted by $\big\{A_k\big\}$ where $k$
is the index.  Each $A_k$ is defined as $\big\{\omega \mid \omega \in \Omega, k
\in I_{max}, |I_{max}| = 1\big\}$, where $I_{max} = \argmax_{k} (g_k(\omega) +
\log{(\pi_k)})$ and $|\cdot|$ denotes the cardinality. Let's define another set
$A_{>1}$ as $\big\{\omega \mid \omega \in \Omega, |I_{max}| > 1\big\}$. Since
$\mathbf{g}$ is a $d$-dimensional random vector, $\forall \omega$, $|I_{max}|
\leq d$. In addition, since $\forall \omega$, each component of
$\mathbf{g}(\omega)$ is finite and bounded, then there must be at least one
maximum. This implies $|I_{max}| \geq 1$, $\forall \omega$. Therefore, it is
trivial to see $(\cup_k A_k) \cup A_{>1}$ is a partition of $\Omega$. \\

Let's consider the value of $\lim\limits_{\tau \rightarrow 0^{+}} y_k$ on each
$A_k$. $\forall \omega \in A_k$, by definition, we have, for $\forall j$, $j \in
\big\{1,\dots,d \big\}$,

\begin{equation}\label{divide}
\begin{split}
y_{j} = f_{j}(\mathbf{g}(\omega)) & = \frac{\exp{((\log{(\pi_{j})} + g_{j}) / \tau)}}
{\sum_{i} \exp{((\log{(\pi_i)} + g_i) / \tau)}} \\
& = \frac{\exp{((\log{(\pi_j)} + g_j - \log{(\pi_{k}) -g_{k}}) / \tau)}}
{\sum_{i} \exp{((\log{(\pi_i)} + g_i - \log{(\pi_{k}) -g_{k}}) / \tau)}}
\end{split}
\end{equation}

The last step of Equation \ref{divide} is obtained by dividing the term
$\exp{(\log{(\pi_k)} + g_k))}$ from both the numerator and the denominator. It is trivial
to see that for any $j \neq k$,
$\lim\limits_{\tau \rightarrow 0^{+}} \exp{((\log{(\pi_j)} + g_j - \log{(\pi_{k}) -g_{k}}) / \tau)} = 0$
and only when $j = k$, the limit is equal to 1. This is also true for $i$ in the denominator. Then
we can conclude, for any index $k$,

\begin{equation}\label{imax1}
\forall \omega \in A_k, \lim\limits_{\tau \rightarrow 0^{+}} y_j = 1 \text{ when } j = k,
\text{ otherwise } 0.
\end{equation}

Similarly as the above analysis, we can draw the corresponding conclusion for set $A_{>1}$.

\begin{equation}\label{imax2}
\forall \omega \in A_{>1}, \lim\limits_{\tau \rightarrow 0^{+}} y_j = \frac{1}{|I_{max}|} \text{ when } j \in I_{max}, \text{ otherwise } 0.
\end{equation}

For simplicity, we use $\mathbf{e}_k$ to denote the unit vector on $k$-th
dimension in $\mathbb{R}^d$, which is defined as only the $k$-th element is 1
and all others are 0. And we denote the set that contains all values of
$\lim\limits_{\tau \rightarrow 0^{+}} \mathbf{y}$ in Equation \ref{imax2} as
$O$. Then Equation \ref{imax1} implies $p(A_k) \leq p(\lim\limits_{\tau
\rightarrow 0^{+}} \mathbf{y} = \mathbf{e}_k)$ and Equation \ref{imax2} implies
$p(A_{>1}) \leq p(O)$. \\

It is also trivial to see that $\lim\limits_{\tau \rightarrow 0^{+}}
\mathbf{y}(\omega) = \mathbf{e}_k$ implies $\omega \in A_k$. This can be proved
by contradiction. Assume $\omega \notin A_k$, since $(\cup_k A_k) \cup A_{>1}$
is a partition of $\Omega$, $\omega$ must be in any other set $A_j$ where $j
\neq k$ or $A_{>1}$. Apparently this is impossible because of Equation
\ref{imax1} and \ref{imax2}. Then we have $p(\lim\limits_{\tau \rightarrow 0^{+}}
\mathbf{y} = \mathbf{e}_k) \leq p(A_k)$ and $p(O) \leq p(A_{>1})$.

\begin{equation}\label{result}
\begin{split}
p(A_k) &= p(\lim\limits_{\tau \rightarrow 0^{+}} \mathbf{y} = \mathbf{e}_k) \\
p(A_{>1}) &= p(O) \\
\end{split}
\end{equation}

Due to Lemma \ref{l1} and its proof, we know $p(A_k) = \frac{\pi_{k}}{\sum_i
{\pi_i}}$. (Please notice that Gumbel(0, 1) is a continuous distribution, so
$p(y_k \leq g_{k'})  = p(y_k < g_{k'})$ in the proof of Lemma \ref{l1}.) By
summing $k$, we have $\sum_k p(A_k) = 1$. On the other hand, because of the fact
that $(\cup_k A_k) \cup A_{>1}$ is a partition of $\Omega$, $\sum_k p(A_k) +
p(A_{>1}) = 1$. Then we get $p(A_{>1}) = 0$, which makes the points in $A_{>1}$
and $O$ unmeaningful. (This can be seen in another way
that $A_{>1}$ is exactly a zero-measure set.) It turns out that $\lim\limits_{\tau \rightarrow 0^{+}} \mathbf{y}$ is defined only on $\big\{\mathbf{e}_k\big\}$. Take $\big\{\mathbf{e}_k\big\}$
as $\Omega'$ and $\mathcal{F}'$ is the $\sigma$-algebra on it. Combining with the result in
Equation \ref{result}, we get the distribution of $\lim\limits_{\tau \rightarrow 0^{+}}
\mathbf{y}$ as,

\begin{equation}
p'(\lim\limits_{\tau \rightarrow 0^{+}} \mathbf{y} = \mathbf{e}_k) = \frac{\pi_{k}}{\sum_i
{\pi_i}}
\end{equation}
which is usually written as

\begin{equation}
p'(\lim\limits_{\tau \rightarrow 0^{+}} \mathbf{y}) = \prod_{j} (\frac{\pi_{j}}{\sum_i
{\pi_i}})^{y_j}
\end{equation}
where $p'$ is the probability measure defined on the probability space $(\Omega', \mathcal{F}', p')$.

\item
This is a trivial corollary of (2).

\end{enumerate}
\end{proof}

\section{AAE Experiment}
In our experiment, the architecture of AAE we used consists of 8 multilayer perceptrons (MLP):
1) The Shared Encoder encoding the raw data to a shared deep representation that could be further decomposed into a categorical id and the noise, 2) The Categorical Encoder mapping the shared representation to
an one-hot categorical vector, 3) The Style Encoder translating the shared representation to a
gaussian noise, 4) The Categorical Decoder and 5) The Style Decoder reconstructing the shared representation
from the categorical id and the noise, 6) The Shared Decode reconstructing the raw data from the shared
representation, 7) The Discriminator of Categorical-WGAN and 8) The Discriminator of Style-WGAN. The
structure of each components are listed in the following table \ref{AAEModel}.

\begin{table}[]
\centering
\caption{Structures of all sub-components of our AAE model.}
\label{AAEModel}
\begin{tabular}{|c|c|}
 \hline
 Component & Architecture\footnote{Since all components are modeled with MLP,
 we represents the layers of MLP with numbers separate with "$\times$". Each number refers
 to the number of nodes in that layer.} \\
 \hline
 Shared Encoder & $400_{\text{BN, 0.1 DO}}\times400_{\text{BN}}\times400_{\text{BN}}$
 \footnote{Note for the subscript and superscript: "BN" refers to Batch Normalization; "DO" prefixed with a number refers to dropout and the number is the dropout probability; Superscript refers to an activation
 function other than "relu". We use "relu" as the default activation function.} \\
 \hline
 Categorical Encoder & $100_{\text{BN}}\times8^{\text{softmax}}$ \\
 \hline
 Style Encoder & $200_{\text{BN}}\times16^{\text{id}}$ \\
 \hline
 Categorical Decoder & $100_{\text{0.1 DO}}\times400_{\text{BN}}$ \\
 \hline
 Style Decoder & $200_{\text{0.1 DO}}\times400_{\text{BN}}$ \\
 \hline
 Shared Decoder & $400_{\text{BN, 0.1 DO}}\times400_{\text{BN}}\times240_{\text{BN}}^{\text{id}}$ \\
 \hline
 Discriminator (Cat-GAN) & $400_{\text{0.1 DO}}\times400_{\text{0.1 DO}}\times400\times200\times200\times200\times200\times1^{\text{id}}$\\
 \hline
 Discriminator (Style-GAN) & $400_{\text{0.1 DO}}\times400_{\text{0.1 DO}}\times400\times200\times200\times200\times200\times1^{\text{id}}$\\
 \hline
\end{tabular}
\end{table}

We use the Adam Optimizer ($\beta_1 = 0.5$, $\beta_2 = 0.9$) to optimize all
loss functions with a minibatch of 4096 shuffled examples. The learning rate for
updating the autoencoder is $2.0 \times 10^{-5}$. For both the categorical GAN
and the style GAN, we use a learning rate of $2.0 \times 10^{-4}$ to train and a
gradient penalty weight of $10$ to regularize the weights.

\section{AAE with Gumbel Softmax Experiment}
We use a very similar model architecture as the AAE Experiment. We simply replace
the activation function of the last layer of Categorical Encoder with "id". This
is because we need to produce "fake" samples of the Gumbel distributions. \\

We anneal the parameter $\tau$ from 10 to 0.01 in the first 20000 steps with a
power 4 of polynomial decay. After 20000 steps, $\tau$ was remained as 0.01.
The following figure depicts the annealing process of $\tau$.

\begin{figure}[h]
  \centering
  \includegraphics[width=3.5in]{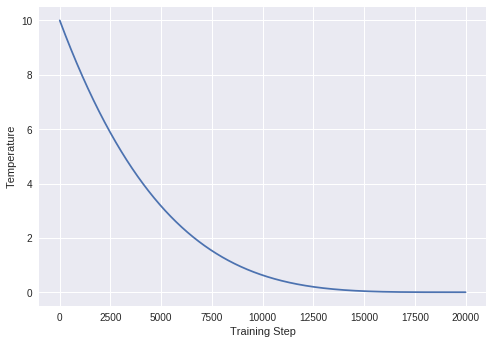}
  \caption{The annealing process of the parameter $\tau$.}
\end{figure}

\clearpage
\section{Architecture of AAE}
\begin{figure}[h]
  \centering
  \includegraphics[width=5in]{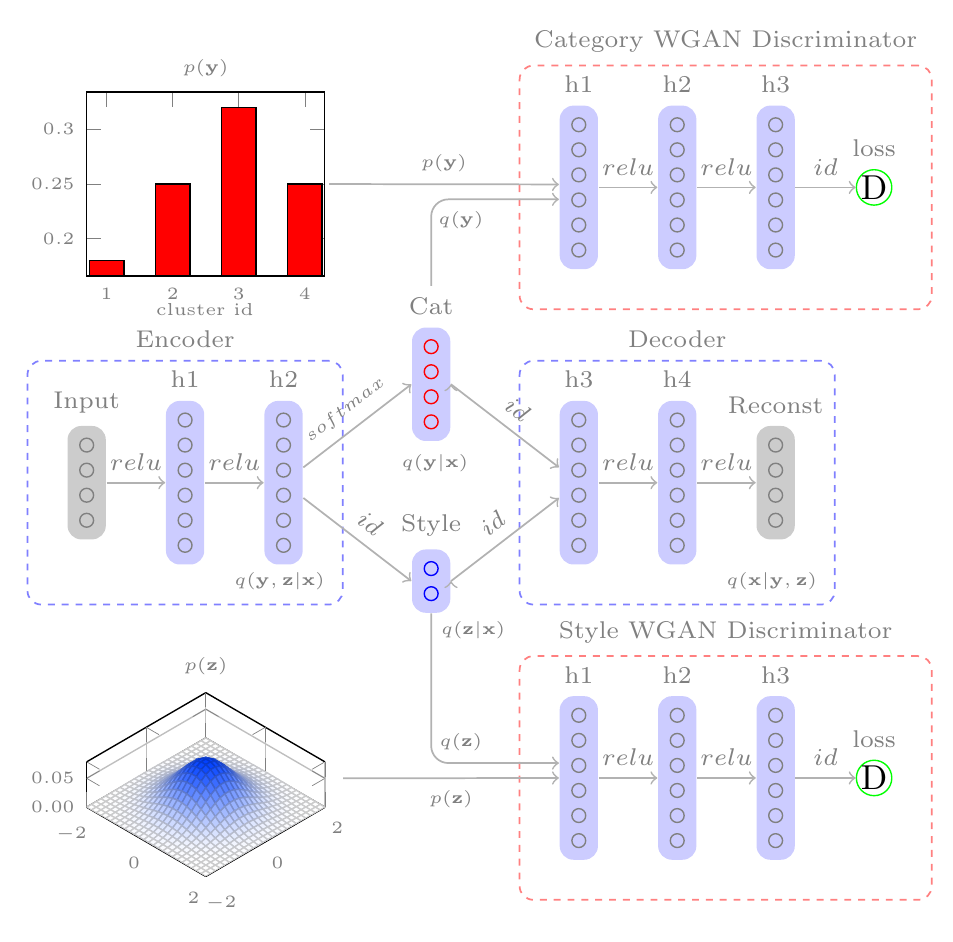}
  \caption{The general architecture of AAE for clustering. This architecture is represented
  in a similar way as the one of AAE with Gumbel-Softmax.}
\end{figure}

\clearpage
\section{Visualization on Random Selections of Clustering Results}
\begin{figure}[h]
  \centering
  \includegraphics[width=5in]{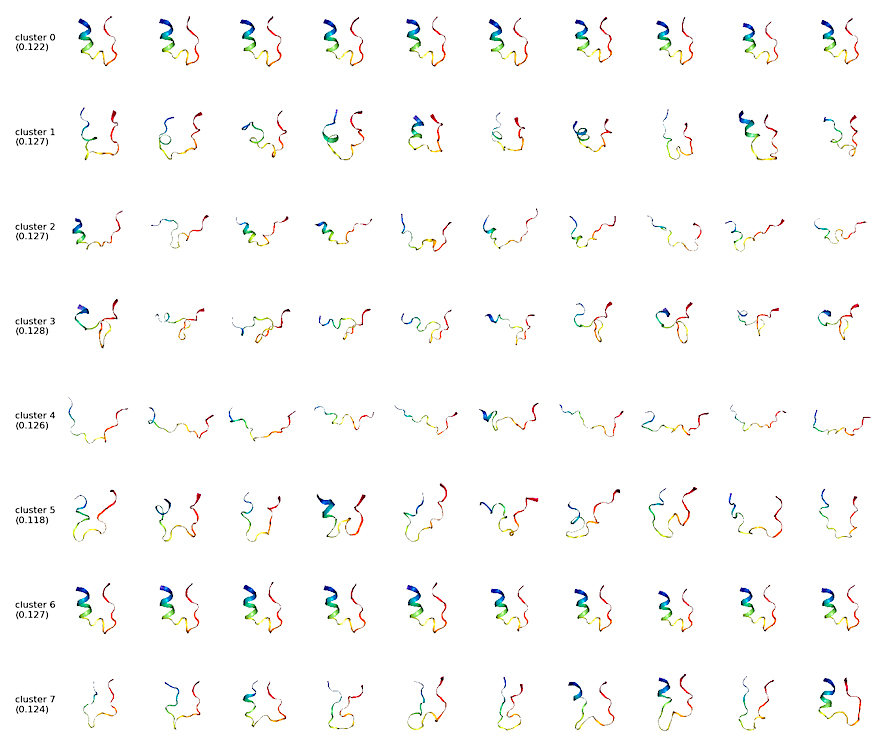}
  \caption{Random frames selected from the first 10,000 of each cluster (AAE)}
\end{figure}

\begin{figure}[h]
  \centering
  \includegraphics[width=5in]{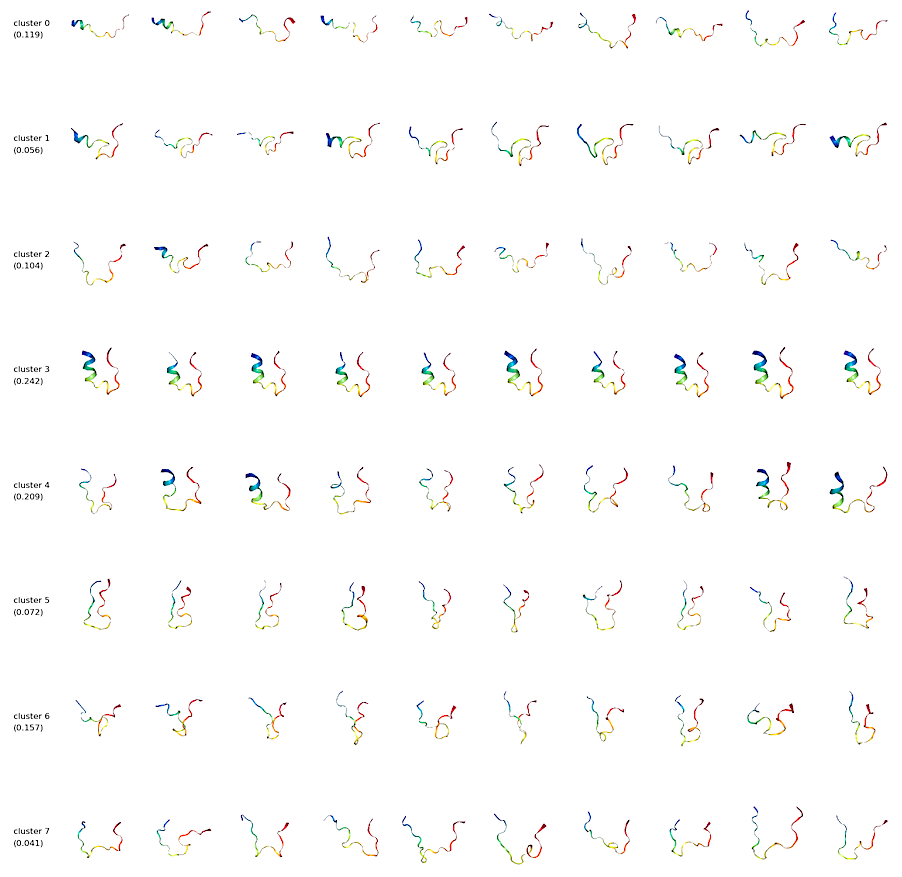}
  \caption{Random frames selected from the first 10,000 of each cluster (AAE-GS)}
\end{figure}


\begin{thebibliography}{10}

\bibitem{Tensorflow}
Martin Abadi, Paul Barham, Jianmin Chen, Zhifeng Chen, Andy Davis, Jeffrey
  Dean, Matthieu Devin, Sanjay Ghemawat, Geoffrey Irving, Michael Isard,
  Manjunath Kudlur, Josh Levenberg, Rajat Monga, Sherry Moore, Derek~G. Murray,
  Benoit Steiner, Paul Tucker, Vijay Vasudevan, Pete Warden, Martin Wicke, Yuan
  Yu, and Xiaoqiang Zheng.
\newblock Tensorflow: A system for large-scale machine learning.
\newblock In {\em 12th USENIX Symposium on Operating Systems Design and
  Implementation (OSDI 16)}, pages 265--283, 2016.

\bibitem{Gromacs}
Mark~James Abraham, Teemu Murtola, Roland Schulz, Szilárd Páll, Jeremy~C.
  Smith, Berk Hess, and Erik Lindahl.
\newblock Gromacs: High performance molecular simulations through multi-level
  parallelism from laptops to supercomputers.
\newblock {\em SoftwareX}, 1–2:19 -- 25, 2015.

\bibitem{Metrics}
Charu~C. Aggarwal, Alexander Hinneburg, and Daniel~A. Keim.
\newblock On the surprising behavior of distance metrics in high dimensional
  space.
\newblock In {\em Lecture Notes in Computer Science}, pages 420--434. Springer,
  2001.

\bibitem{WGAN}
Martin Arjovsky, Soumith Chintala, and L{\'e}on Bottou.
\newblock {W}asserstein generative adversarial networks.
\newblock In Doina Precup and Yee~Whye Teh, editors, {\em Proceedings of the
  34th International Conference on Machine Learning}, volume~70 of {\em
  Proceedings of Machine Learning Research}, pages 214--223, International
  Convention Centre, Sydney, Australia, 06--11 Aug 2017. PMLR.

\bibitem{charmm}
B.R. Brooks, III C.L.~Brooks, and Jr. A.D.~MacKerell.
\newblock Charmm: The biomolecular simulation program.
\newblock {\em J Comput Chem.}, 30(10):1545--1614, 2009.

\bibitem{Gromos}
Xavier Daura, Karl Gademann, Bernhard Jaun, Dieter Seebach, Wilfred~F van
  Gunsteren, and Alan~E Mark.
\newblock Peptide folding: when simulation meets experiment.
\newblock {\em Angewandte Chemie International Edition}, 38(1-2):236--240,
  1999.

\bibitem{GAN}
Ian Goodfellow, Jean Pouget-Abadie, Mehdi Mirza, Bing Xu, David Warde-Farley,
  Sherjil Ozair, Aaron Courville, and Yoshua Bengio.
\newblock Generative adversarial nets.
\newblock In Z.~Ghahramani, M.~Welling, C.~Cortes, N.~D. Lawrence, and K.~Q.
  Weinberger, editors, {\em Advances in Neural Information Processing Systems
  27}, pages 2672--2680. Curran Associates, Inc., 2014.

\bibitem{WGAN_GP}
Ishaan Gulrajani, Faruk Ahmed, Mart{\'{\i}}n Arjovsky, Vincent Dumoulin, and
  Aaron~C. Courville.
\newblock Improved training of wasserstein gans.
\newblock {\em CoRR}, abs/1704.00028, 2017.

\bibitem{Gumbel}
E.J. Gumbel.
\newblock {\em Statistical theory of extreme values and some practical
  applications: a series of lectures}.
\newblock Applied mathematics series. U. S. Govt. Print. Office, 1954.

\bibitem{CVAE}
Eric Jang, Shixiang Gu, and Ben Poole.
\newblock Categorical reparameterization with gumbel-softmax.
\newblock 2017.

\bibitem{trpcage5}
J~Juraszek and PG~Bolhuis.
\newblock Sampling the multiple folding mechanisms of trp-cage in explicit
  solvent.
\newblock {\em Proceedings of the National Academy of Sciences},
  103(43):15859--15864, 2006.

\bibitem{FoldingAtHome}
S.~M. {Larson}, C.~D. {Snow}, M.~{Shirts}, and V.~S. {Pande}.
\newblock {Folding@Home and Genome@Home: Using distributed computing to tackle
  previously intractable problems in computational biology}.
\newblock {\em ArXiv e-prints}, January 2009.

\bibitem{fastfolding}
K.~Lindorff-Larsen, S.~Piana, R.O. Dror, and D.E. Shaw.
\newblock How fast-folding proteins fold.
\newblock {\em Science}, Oct 28;334(6055):517-20., 2011.

\bibitem{Concrete}
Chris~J. Maddison, Andriy Mnih, and Yee~Whye Teh.
\newblock The concrete distribution: {A} continuous relaxation of discrete
  random variables.
\newblock {\em CoRR}, abs/1611.00712, 2016.

\bibitem{ASampling}
Chris~J. Maddison, Daniel Tarlow, and Tom Minka.
\newblock {A* Sampling}.
\newblock In {\em Advances in Neural Information Processing Systems 27}, 2014.

\bibitem{AAE}
Alireza Makhzani, Jonathon Shlens, Navdeep Jaitly, and Ian Goodfellow.
\newblock Adversarial autoencoders.
\newblock In {\em International Conference on Learning Representations}, 2016.

\bibitem{trpcage2}
Fabrizio Marinelli, Fabio Pietrucci, Alessandro Laio, and Stefano Piana.
\newblock A kinetic model of trp-cage folding from multiple biased molecular
  dynamics simulations.
\newblock {\em PLoS computational biology}, 5(8):e1000452, 2009.

\bibitem{MDTraj}
Robert~T. McGibbon, Kyle~A. Beauchamp, Matthew~P. Harrigan, Christoph Klein,
  Jason~M. Swails, Carlos~X. Hern{\'a}ndez, Christian~R. Schwantes, Lee-Ping
  Wang, Thomas~J. Lane, and Vijay~S. Pande.
\newblock Mdtraj: A modern open library for the analysis of molecular dynamics
  trajectories.
\newblock {\em Biophysical Journal}, 109(8):1528 -- 1532, 2015.

\bibitem{MDAnalysis}
Naveen Michaud{-}Agrawal, Elizabeth~J. Denning, Thomas~B. Woolf, and Oliver
  Beckstein.
\newblock Mdanalysis: {A} toolkit for the analysis of molecular dynamics
  simulations.
\newblock {\em Journal of Computational Chemistry}, 32(10):2319--2327, 2011.

\bibitem{NAMD}
J.~C. Phillips, R.~Braun, W.~Wang, J.~Gumbart, E.~Tajkhorshid, E.~Villa,
  C.~Chipot, R.~D. Skeel, L.~Kale, and K.~Schulten.
\newblock {{S}calable molecular dynamics with {N}{A}{M}{D}}.
\newblock {\em J Comput Chem}, 26(16):1781--1802, Dec 2005.

\bibitem{trpcage3}
Jed~W Pitera and William Swope.
\newblock Understanding folding and design: Replica-exchange simulations
  of``trp-cage''miniproteins.
\newblock {\em Proceedings of the National Academy of Sciences},
  100(13):7587--7592, 2003.

\bibitem{amber}
R.~Salomon-Ferrer, D.A. Case, and R.C. Walker.
\newblock An overview of the amber biomolecular simulation package.
\newblock {\em WIREs Comput. Mol. Sci.}, 3:198--210, 2013.

\bibitem{ANTON}
D.~E. Shaw, J.~P. Grossman, J.~A. Bank, B.~Batson, J.~A. Butts, J.~C. Chao,
  M.~M. Deneroff, R.~O. Dror, A.~Even, C.~H. Fenton, A.~Forte, J.~Gagliardo,
  G.~Gill, B.~Greskamp, C.~R. Ho, D.~J. Ierardi, L.~Iserovich, J.~S. Kuskin,
  R.~H. Larson, T.~Layman, L.~S. Lee, A.~K. Lerer, C.~Li, D.~Killebrew, K.~M.
  Mackenzie, S.~Y.~H. Mok, M.~A. Moraes, R.~Mueller, L.~J. Nociolo, J.~L.
  Peticolas, T.~Quan, D.~Ramot, J.~K. Salmon, D.~P. Scarpazza, U.~B. Schafer,
  N.~Siddique, C.~W. Snyder, J.~Spengler, P.~T.~P. Tang, M.~Theobald, H.~Toma,
  B.~Towles, B.~Vitale, S.~C. Wang, and C.~Young.
\newblock Anton 2: Raising the bar for performance and programmability in a
  special-purpose molecular dynamics supercomputer.
\newblock pages 41--53, Nov 2014.

\bibitem{trpcage1}
Christopher~D Snow, Bojan Zagrovic, and Vijay~S Pande.
\newblock The trp cage: folding kinetics and unfolded state topology via
  molecular dynamics simulations.
\newblock {\em Journal of the American Chemical Society}, 124(49):14548--14549,
  2002.

\bibitem{trpcage4}
Ruhong Zhou.
\newblock Trp-cage: folding free energy landscape in explicit water.
\newblock {\em Proceedings of the National Academy of Sciences},
  100(23):13280--13285, 2003.

\bibitem{Survey}
Arthur Zimek, Erich Schubert, and Hans{-}Peter Kriegel.
\newblock A survey on unsupervised outlier detection in high-dimensional
  numerical data.
\newblock {\em Statistical Analysis and Data Mining}, 5(5):363--387, 2012.

\end{thebibliography}
\end{document}